\documentclass[journal]{IEEEtran}
\usepackage{amsfonts,amsmath,amssymb}
\usepackage{dsfont}
\usepackage{cite}
\usepackage{graphicx}
 \usepackage{booktabs}
\usepackage{url}
 \usepackage{caption}
\usepackage{subcaption}
\usepackage{bm}
\usepackage{bbding}
\usepackage{amsmath}
\usepackage{enumerate}
\usepackage{multirow}
\usepackage{algorithm}
\usepackage{cancel}
\usepackage{xcolor}%
\usepackage{algcompatible}
\usepackage{soul}
\usepackage{pdfpages}
\usepackage{epstopdf}
\usepackage{slashbox}

\usepackage{array}
\usepackage{CJK}
\usepackage{graphicx}
\usepackage{amsthm}

  \newtheorem{proposition}{Proposition}
 \newtheorem{corollary}{Corollary}
 
  \newtheorem{remark}{Remark}

\newcommand{\figref}[1]{\figurename~\ref{#1}}

\DeclareMathOperator{\diag}{diag}

\begin{document}
\title{On the Spectrum of OTFS/VOFDM Signals: PSD Analysis and Bandwidth Allocation}
\author{Wei~Wang,~\IEEEmembership{Senior Member,~IEEE}, and~Xiang-Gen~Xia,~\IEEEmembership{Fellow,~IEEE}
\thanks{  W. Wang is  with the School of Information Science and Technology, Harbin Institute of Technology (Shenzhen), Shenzhen
518055, China (e-mail: wang\_wei@hit.edu.cn).

X.-G. Xia is with the Department of Electrical and Computer Engineering, University of Delaware, Newark, DE 19716, USA (e-mail: xxia@ece.udel.edu).
}
}

\maketitle

\begin{abstract}
Orthogonal time frequency space (OTFS)/vector OFDM (VOFDM) is widely regarded as a promising waveform for next-generation mobile communications. However, its spectral characteristics are not yet fully understood. The bandwidth allocation scheme, which is crucial for OTFS's integration into practical wireless standards, also remains unexplored. In this paper, we investigate the spectral characteristics of OTFS signals by analyzing their power spectral density (PSD). We demonstrate that the PSD of discrete-time OTFS signals is periodic with a period of $\frac{1}{MT_s}$, where $M$ is  the size of the time/Doppler domain in OTFS, a.k.a., the vector size in VOFDM, and $T_s$ is the sampling interval length of digital to analog converter (DAC), resulting in $M$ identical spectral components within the spectral range $[-\frac{1}{2T_s}, \frac{1}{2T_s})$ of the continuous-time OTFS signal. The periodicity makes bandwidth allocation for OTFS/VOFDM signaling substantially challenging. Furthermore, we establish a relationship between the PSD of OFDM signals and that of OTFS signals, revealing that, when the information symbols are independent, the PSD of OTFS signals is equal to the sum of the PSDs of the component-expanded OFDM (CEP-OFDM) signals. Lastly, we derive a relationship between the information symbols and the corresponding OTFS spectrum, and based on which, we propose a null-space-based linear precoding (NSLP) method for OTFS signals to enable flexible bandwidth allocation. Numerical results validate our analytical results regarding the PSD of OTFS signals and show the effectiveness of our proposed NSLP method in tailoring the spectrum of OTFS signals.
\end{abstract}

\begin{IEEEkeywords}
OTFS, VOFDM, PSD, bandwidth allocation, component-expanded OFDM
\end{IEEEkeywords}

\section{Introduction}

Every transition to a new generation of mobile communications involves a disruption in the underlying air interface \cite{hadani2018otfs, OTFS, liyanaarachchi2021optimized, 5Gwaveform}. The 2G Global System for Mobile Communications (GSM) is characterized by time-division multiple access (TDMA), along with the use of  Gaussian filtered minimum shift keying (GMSK) \cite{turletti1996gmsk, redl1995introduction}, IMT-2000 (3G) is characterized by code-division multiple access (CDMA) \cite{dahlman1998umts},  IMT-Advanced (4G) is characterized by orthogonal frequency division multiplexing (OFDM) \cite{dahlman20134g}, and IMT-2020 (5G) continues adopting OFDM \cite{zaidi2016waveform}. The interruption of development in modulation technology from 4G to 5G has hindered the performance breakthrough in physical layer.  

Modulation techniques tailored for delay–Doppler channels (i.e., doubly selective channels) have recently regained significant research interest, driven by advances in LEO satellite communications, UAV networks, and intelligent transportation systems \cite{xia2025rethink}.  With new capabilities, e.g., high resilience to Doppler and delay spread, and improved robustness to multipath channels, orthogonal time frequency space (OTFS) is widely regarded as a promising waveform for IMT-2030 (6G) communications \cite{das2022orthogonal,UnifyingOTFS, wei2021orthogonal, xiao2021overview}. However, it is important to note that OTFS is in fact equivalent \cite{raviteja2019otfs, van2024equivalence,PCCHINGOTFS, xia2022comments} to the earlier modulation schemes, such as vector OFDM (VOFDM) \cite{xia2000precoded, xia2001precoded}, orthogonal signal-division multiplexing (OSDM) \cite{SuehiroKronecker}, and asymmetric OFDM (A-OFDM) \cite{zhang2007asymmetric}. Therefore, throughout this paper, we will use OTFS as a unified term for the modulation schemes discussed above.

The information symbols of OTFS, i.e., QAM/PSK modulated symbols, are embedded in the delay-Doppler domain \cite{hadani2018otfs}. However, wireless signals are regulated in their spectrum.  For a wireless communication system, the signal bandwidth is basically determined by the air interface technologies, which refer to the digital signal processing algorithms that  map information bits to the waveform \cite{jondral2005software, SDRreview}. For the prevailing technology OFDM/OFDMA, which is extensively used in various wireless standards, e.g., IEEE 802.11,  DVB-T, DVB-H, 4G-LTE, and 5G-NR, the information stream is firstly converted to a parallel stream, which are interpreted as symbols in the frequency domain, and then transformed into the discrete-time samples by passing through an inverse fast Fourier transform (IFFT) module. The over-the-air signal is subsequently generated by the digital-to-analog converter (DAC) and power amplifier \cite{SDRofdm}. The spectrum of the generated waveform is determined by 1) the mapping from information  symbols to the discrete-time waveform sample, 2) the digital shaping filter and the interpolation filter of DAC, and 3) the sampling rate of the DAC. To control the bandwidth, OFDM can simply transmit zeros on unwanted subcarriers \cite{van2010analytical, rajabzadeh2017power,talbot2008spectral}. In a word, the strategy for bandwidth allocation in OFDM is both straightforward and effective.

Unlike OFDM signals, where the correspondence between the information symbols and the spectrum is clearly elaborated via the concept of subcarrier, the relationship between a single QAM/PSK symbol and its  resulting spectrum of OTFS signal remains almost unknown. Understanding the relationship between the information symbols and the spectrum of the OTFS waveform is crucial for bandwidth allocation, making it essential for the practical applications of OTFS.

Power spectral density (PSD), which is a measure of the power distribution across frequencies, is an important spectral property of the signal \cite{youngworth2005overview}.  PSD analysis is vital as the spectral properties of  wireless equipments must comply with the regulations and standards about electromagnetic emissions established by the regulatory bodies, such as Federal Communications Commission (FCC) of the U.S. and State Radio Regulatory Commission (SRRC) of China. The PSD for OTFS signaling is numerically investigated in \cite{UnifyingOTFS} without considering bandwidth allocation. To explore the rationale behind how zero-settings in the information-bearing matrix $\mathbf{X}$ affect the PSD of OTFS signaling, we first provide a theoretical investigation of the PSD characteristics of OTFS. Building on this analysis, we further propose an efficient and effective bandwidth allocation scheme, thereby extending our previous work on the discrete spectrum analysis of OTFS/VOFDM signals in \cite{xia2024discrete}.

To summarize, the contributions of this paper are as follows.
\begin{itemize}
\item We derive the analytical expression of  PSD of  OTFS signals generated by DAC. In addition, we reveal the fact that the PSD of discrete-time OTFS signals is periodic with a period of $\frac{1}{MT_s}$, where $M$ is the number of component-expanded OFDM (CEP-OFDM) signals comprising the OTFS signal and $T_s$ is the sampling interval length of DAC, resulting in $M$ identical spectral components within the spectral range $[-\frac{1}{2T_s}, \frac{1}{2T_s})$ of the continuous-time OTFS signal generated by DAC.
\item We establish the relationship between the PSD of OFDM signals and that of OTFS signals. Specifically, when the information symbols are independent, the PSD of OTFS signals is the sum of the PSDs of the CEP-OFDM signals that form the OTFS signal. This further connects OFDM and OTFS in the spectrum domain. It illustrates that the OTFS signal is a time-interleaved version of $M$ OFDM signals, each with an FFT/IFFT size of $N$, and can be obviously seen from  VOFDM \cite{xia2000precoded, xia2001precoded}. 
\item We derive the relationship between the information symbols and the corresponding spectrum of OTFS signals. Based on the relationship, we propose a null-space-based linear precoding (NSLP) method for OTFS signals that enables flexible bandwidth allocation. Based on our derived expression of NSLP precoder, we further obtain the expression of the systematic-form precoder.
\end{itemize}
Numerical simulations validate our analytical results regarding the PSD of OTFS signals and show the effectiveness of our proposed NSLP method in tailoring the spectrum of OTFS signals.

{\em{Notations:\quad}} Column vectors (matrices) are denoted by bold-face lower (upper) case letters, $\mathbf{x}[n]$ denotes the $n$-th element in the vector $\mathbf{x}$, $(\cdot)^*$, $(\cdot)^T$ and $(\cdot)^{H}$  represent conjugate, transpose and conjugate transpose operation, respectively,  $\otimes$ denotes Kronecker product, $\mathcal{F}(\cdot)$ denotes the Fourier transform, and $\diag(x_1, x_2,\cdots, x_N )$  represents a diagonal matrix with the given elements $x_1, x_2,\cdots, x_N$ on its main diagonal, $\| \mathbf{x} \|_2$ is  $\ell^2$ norm of the vector $\mathbf{x}$,  $\mathbb{E}(\cdot)$  represents the expectation of a random variable, and $|\mathcal{I}|$ is the cardinality of the set $\mathcal{I}$.


\section{PSD Analysis of OTFS Signals Generated by DAC}

Although many existing analytical results of the multi-carrier systems, e.g., OFDM and OTFS, are based on an analog representation of the signal, the actual baseband signal is first constructed in the digital domain by the digital signal processing (DSP) module FFT/IFFT, and then converted to the analog signal via a DAC \cite{lin2003analog}. Therefore,  in this section, we analyze the PSD of OTFS signals generated by DAC.

\subsection{The Waveform of OTFS Signals Generated by DAC}

The bit-to-symbol mapping of OTFS is carried out in the delay-Doppler domain, and then an Inverse Symplectic Fast Fourier Transform (ISFFT) precoding is performed to convert the QAM/PSK signal  $x_{l,k}$ to the time-frequency signal $\widetilde{x}_{m,n}$. Subsequently,  $\widetilde{x}_{m,n}$ is turned into the time-domain waveform by an OFDM modulator using  the Heisenberg transform that consists of IFFT operation and pulse-shaping operation.

For the $i$-th OTFS symbol, the information symbol matrix  $ {\mathbf{X}}_i \in \mathbb{C}^{M \times N}$ is in the delay-Doppler domain. After ISFFT precoding, the QAM/PSK signal in delay-Doppler domain will be transformed into the time-frequency domain as follows
\begin{align}
\widetilde{x}_{i, m, n} =  \frac{1}{\sqrt{MN}} \sum_{l=0}^{M-1} \sum_{k=0}^{N-1}  {x}_{i,l,k}   e^{\jmath 2\pi (\frac{k n}{N}-\frac{l m}{M})}
\end{align}
where $ {x}_{i,l,k}$ is the $(l, k)$-th element of $\mathbf{X}_i$ that denotes the QAM/PSK constellation signal of the $(l,k)$-th bin  in the delay-Doppler domain of the $i$-th OTFS symbol.

Then, the discrete-time OTFS signal ${s}_{i,l,n}$ is obtained by further transforming $ \widetilde{x}_{i, m, n} $ to the time domain as follows
\begin{align}
 {s}_{i,l,n} & = \frac{1}{\sqrt{M}}\sum_{m=0}^{M-1}   \widetilde{x}_{i, m, n} e^{\jmath  {2\pi}\frac{l m}{M}}   g_P[l] \notag \\
& = \frac{1}{\sqrt{N}}  \sum_{k=0}^{N-1}  {x}_{i, l,k}   e^{\jmath 2\pi \frac{k n}{N} }     g_P[l] \label{OTFS1}
\end{align}
where ${g}_P[l]$ is the digital shaping filter\footnotemark. Note that ${s}_{i,l,n}$ is the discrete-time OTFS signal before the cyclic prefix (CP) insertion. Since this paper is only interested in the PSD of OTFS signals, for simplicity, we do not consider a CP.

\footnotetext{  In this paper, we further set $g_P[l] = 1$ by incorporating the impacts of $g_P[l]$ into the signal constellation for $x_{i,l,k}$. Consequently, we set $\mathbf{G}_P = \diag(\mathbf{g}_P) = \mathbf{I}_M$ in the subsequent context.}

According to \cite{DiverOTFS, PracOTFS,GeOTFS}, the matrix form of \eqref{OTFS1} is represented by
\begin{align}
 {\mathbf{S}}_i = \mathbf{I}_M  {\mathbf{X}}_i \mathbf{F}_N^H  \label{Smatrix}
\end{align}
where ${\mathbf{S}}_i  \in \mathbb{C}^{M\times N}$ and $\mathbf{F}_N \in \mathbb{C}^{N\times N}$ is the DFT matrix.
By column-wise vectorizing $ {\mathbf{S}}_i$, the vector-form of the discrete baseband signal is given by
\begin{align}
 {\mathbf{s}}_i = {\rm{vec}}( {\mathbf{S}}_i ) = (\mathbf{F}_N^H \otimes \mathbf{I}_M) {\rm{vec}}( {\mathbf{X}}_i) \label{OTFSdiscrete}
\end{align}
In this regard, the $\eta$-th element of the discrete-time OTFS signal is given by
\begin{align}
s_{\eta}  =  \sum_{i = -\infty}^{\infty} \sum_{n = 0}^{N-1} \sum_{l=0}^{M-1} {s}_{i,l,n}  \delta [\eta - iMN - nM - l]  \label{Stime}
\end{align}
where ${s}_{i,l,n}$ is the $(l,n)$-th entry of the matrix $\mathbf{S}_i$, as well as the $(nM + l)$-th entry of the vector $\mathbf{s}_i$.

In the following proposition, we derive the cyclostationary property for the discrete-time OTFS signal $s_{\eta}$.
\begin{proposition} {\rm
When the information  symbols $x_{i,l,k}$ are i.i.d. with zero mean and  variance $\mathbb{E}\left( {x}^*_{i,l,k} {x}_{{i}, {l}, {k}} \right) = \sigma^2_{l,k}$ with respect to the index $i$, and are independent across the indices $(l,k)$, the resulting discrete-time OTFS signal  ${s}_{\eta}$ in \eqref{Stime} is cyclostationary with a period of $MN$.  }
\end{proposition}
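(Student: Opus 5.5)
We prove cyclostationarity in the wide sense. The plan is to show that the mean $\mathbb{E}(s_\eta)$ and the autocorrelation $R_s[\eta,\tau] \triangleq \mathbb{E}\left(s^*_{\eta} s_{\eta+\tau}\right)$ are both invariant under $\eta \mapsto \eta + MN$ for every lag $\tau$. Write every time index uniquely as $\eta = iMN + nM + l$ with $i\in\mathbb{Z}$, $0\le n\le N-1$, $0\le l\le M-1$. By \eqref{Stime}, exactly one term of the triple sum survives, so $s_\eta = s_{i,l,n}$. Shifting $\eta$ by $MN$ changes only the block index $i \mapsto i+1$ and leaves $(l,n)$ unchanged. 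The whole argument therefore reduces to checking that the relevant statistics of $s_{i,l,n}$ do not depend on $i$.

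\textbf{Mean.} By \eqref{OTFS1} with $g_P[l]=1$, we have $s_{i,l,n} = \frac{1}{\sqrt N}\sum_{k} x_{i,l,k} e^{\jmath 2\pi kn/N}$. Linearity and the zero-mean assumption give $\mathbb{E}(s_\eta)=0$ for all $\eta$, which is trivially $MN$-periodic.

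\textbf{Autocorrelation.} Write $\eta+\tau = i'MN + n'M + l'$ with the same unique decomposition. Then
\begin{align}
R_s[\eta,\tau] = \frac{1}{N}\sum_{k=0}^{N-1}\sum_{k'=0}^{N-1} \mathbb{E}\left(x^*_{i,l,k}x_{i',l',k'}\right) e^{-\jmath 2\pi kn/N} e^{\jmath 2\pi k'n'/N}. \notag
\end{align}
The i.i.d. assumption across $i$, together with independence across $(l,k)$ and zero mean, gives $\mathbb{E}(x^*_{i,l,k}x_{i',l',k'}) = \sigma^2_{l,k}\,\delta[i-i']\delta[l-l']\delta[k-k']$. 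Hence
\begin{align}
R_s[\eta,\tau] = \delta[i-i']\,\delta[l-l']\,\frac{1}{N}\sum_{k=0}^{N-1}\sigma^2_{l,k}\, e^{\jmath 2\pi k(n'-n)/N}. \notag
\end{align}

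\textbf{Invariance under the shift.} Now replace $\eta$ by $\eta+MN$ with $\tau$ fixed. Both $\eta$ and $\eta+\tau$ then have their block indices raised by one, $i\mapsto i+1$ and $i'\mapsto i'+1$, while $(l,n)$ and $(l',n')$ are unchanged. Consequently $\delta[i-i']$, $\delta[l-l']$, $n'-n$, and $\sigma^2_{l,k}$ are all unchanged, because the variance does not depend on $i$. This yields $R_s[\eta+MN,\tau]=R_s[\eta,\tau]$, which establishes wide-sense cyclostationarity with period $MN$.

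\textbf{Main obstacle.} The only delicate point is bookkeeping: the decomposition of $\eta+\tau$ must be handled uniformly, including lags that cross block boundaries. Working directly with the unique mixed-radix decomposition and noting that adding $MN$ acts only on the top digit, with no carry into $(n,l)$, settles this. It is also worth remarking that $MN$ is in general the period and not merely a multiple of it. Indeed, $R_s[\eta,0]=\frac1N\sum_k\sigma^2_{l,k}$ depends on $l$, and for $\tau\neq 0$ the dependence on the block-boundary factor $\delta[i-i']$ depends on $n$, so the signal is generally not stationary.
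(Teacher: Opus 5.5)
Your proof is correct and follows essentially the same route as the paper's Appendix~A: you show zero mean, collapse the autocorrelation using $\mathbb{E}(x^*_{i,l,k}x_{i',l',k'})=\sigma^2_{l,k}\,\delta[i-i']\,\delta[l-l']\,\delta[k-k']$, and check invariance under the joint shift by $MN$; your mixed-radix decomposition of $\eta$ just makes explicit what the paper encodes with the Kronecker deltas in \eqref{Stime}. Your closing claim that $MN$ is generally the fundamental period goes beyond the statement (and the paper) and is argued only informally, but the proof does not depend on it.
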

\begin{proof}
See Appendix A.
\end{proof}

Feed the discrete-time baseband signal $s_{\eta}$ to a DAC,  the analog signal is obtained as
\begin{align}
  {s}(t)  =  &\sum_{\eta= -\infty}^{\infty}  {s}_{\eta} g_I( t- \eta T_s ) \notag \\
=  &\sum_{i= -\infty}^{\infty} \sum_{n= 0}^{N-1} \sum_{l=0}^{M-1}  {s}_{i,l,n} g_I( t- (iNM + nM + l) T_s ) \notag \\
=  &\sum_{i= -\infty}^{\infty} \sum_{n= 0}^{N-1} \sum_{l=0}^{M-1}  \sum_{k=0}^{N-1}  \frac{{x}_{i,l,k}}{\sqrt{N}}    e^{\jmath 2\pi \frac{k n}{N} }  \cdot \notag \\
&  \qquad \qquad \qquad g_I( t - (iNM + nM + l) T_s) \label{OTFSbaseband}
\end{align}
where $g_I(t)$ is the continuous-time interpolation filter, $T_s$ is the sampling interval length and $f_s = 1/{T_s}$ is the sampling frequency. The procedures of generating the continuous-time OTFS signal are given in \figref{VOFDMwaveform}.

\begin{figure}[tp]{
\begin{center}{\includegraphics[width=7cm ]{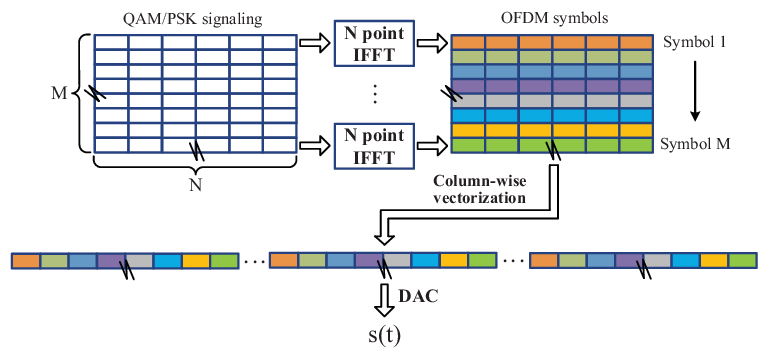}}
\caption{The procedures of generating the continuous-time OTFS signal}\label{VOFDMwaveform}
\end{center}}
\end{figure}

\subsection{The PSD of OTFS Baseband Signal Generated by DAC}

In the following proposition, we derive the PSD  for the continuous-time OTFS signal $s(t)$.
\begin{proposition} {\rm
The resulting continuous-time OTFS signal  $s(t)$ in \eqref{OTFSbaseband} under the conditions in Prop. 1 is cyclostationary with a period of $MNT_s$, and the PSD of the continuous-time OTFS signal $s(t)$  is then given by
\begin{align}
P_c(f) = \underbrace{\sum_{k=0}^{N-1}  \frac{\sigma_{k}^2}{T_s}  \left( \frac{ \sin(\pi(k-fMNT_s) ) }{N \sin(\frac{ \pi}{N} (k-fMNT_s) ) } \right)^2}_{P_d(f)}  |G_I(f)|^2   \label{PSD}
\end{align}
where  $\sigma^2_{k}  = \sum_{l=0}^{M-1}  \frac{\sigma^2_{l,k}}{M}$, $ G_I(f) = \mathcal{F}(g_I(t)) $ is the Fourier transform of  the interpolation filter $g_I(t)$,  and $P_d(f)$ is the PSD of the discrete-time OTFS signal $s_{\eta} $ in \eqref{Stime}  with the sampling rate $\frac{1}{T_s}$.
}
\end{proposition}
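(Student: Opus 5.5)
The plan is to use the cyclostationarity of $s_\eta$ from Prop. 1, compute the time-averaged autocorrelation of $s(t)$, and take its Fourier transform. This is the standard route for PAM-type signals $\sum_\eta s_\eta g_I(t-\eta T_s)$ driven by a cyclostationary sequence.

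\textbf{Step 1: autocorrelation of the discrete-time signal.} First I will compute $R_s[\eta,\tau]=\mathbb{E}(s_{\eta+\tau}s^*_{\eta})$. Write $\eta = iMN+nM+l$ and use \eqref{OTFS1} together with the independence assumptions of Prop. 1 (zero mean, independent over $i$ and over $(l,k)$). Then $R_s[\eta,\tau]$ vanishes unless both samples lie in the same OTFS block $i$ and share the same delay index $l$. Hence $\tau = (n'-n)M$, and
\begin{align}
\mathbb{E}(s_{i,l,n'}s^*_{i,l,n}) = \frac{1}{N}\sum_{k=0}^{N-1}\sigma^2_{l,k}e^{\jmath 2\pi k(n'-n)/N}. \notag
\end{align}
This depends on $\eta$ only through $\eta \bmod MN$, which is consistent with Prop. 1. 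Since the continuous-time signal is obtained by a fixed linear interpolation with shift $T_s$, $s(t)$ is cyclostationary with period $MNT_s$: replacing $t$ by $t+MNT_s$ is the same as shifting $\eta$ by $MN$.

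\textbf{Step 2: time-averaged autocorrelation and the PSD.} Next I average $R_s[\eta,\tau]$ over one period $\eta\in\{0,\dots,MN-1\}$. Fix a lag $\tau = dM$ with $|d|\le N-1$. The number of $n$ with both $n$ and $n+d$ in $\{0,\dots,N-1\}$ is $N-|d|$. This gives
\begin{align}
\bar R_s[dM] = \frac{N-|d|}{N}\cdot\frac{1}{N}\sum_{k=0}^{N-1}\sigma_k^2 e^{\jmath 2\pi kd/N}, \notag
\end{align}
with $\sigma_k^2=\frac1M\sum_l\sigma^2_{l,k}$, and $\bar R_s[\tau]=0$ for lags $\tau$ that are not multiples of $M$. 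I then take the DTFT at frequency $fT_s$, which gives $\sum_d \bar R_s[dM]e^{-\jmath2\pi f dMT_s}$. The triangular (Fejér) weight $(N-|d|)/N^2$ summed against $e^{\jmath2\pi d(k-fMNT_s)/N}$ yields $\frac{1}{N^2}\bigl|\sum_{n=0}^{N-1}e^{\jmath2\pi n(k-fMNT_s)/N}\bigr|^2$. This equals the squared Dirichlet-kernel ratio in \eqref{PSD}. Multiplying by $1/T_s$ for the sampling-rate normalization gives $P_d(f)$. Because the lag support is on multiples of $M$, $P_d(f)$ is visibly periodic in $f$ with period $1/(MT_s)$.

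\textbf{Step 3: passing to continuous time.} Finally, I compute the time-averaged autocorrelation of $s(t)$ over $[0,MNT_s)$:
\begin{align}
\bar R_c(\tau)=\frac{1}{T_s}\sum_{m}\bar R_s[m]\int g_I(t+\tau-mT_s)g_I^*(t)\,dt. \notag
\end{align}
This uses the standard argument that averaging over a full period turns the sum over $\eta$ in one period, combined with integration over $t$ in $[0,MNT_s)$, into a single integral over $\mathbb{R}$. Fourier transforming gives $P_c(f)=P_d(f)|G_I(f)|^2$.

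The main obstacle is the bookkeeping in Steps 2–3. In Step 2 the edge effects at block boundaries produce the $N-|d|$ count, and I need to recognize the Fejér sum as the squared sine ratio. In Step 3 I need to justify interchanging the period average with the sum and integral. I would handle the first by writing the average directly as $\frac{1}{MN}\sum_{l}\mathbb{E}\bigl|\sum_n s_{i,l,n}e^{-\jmath2\pi fnMT_s}\bigr|^2$, i.e.\ a periodogram per delay row. This gives the squared magnitude immediately and avoids the triangle-count manipulation.
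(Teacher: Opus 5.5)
Your argument is correct, but it is organized differently from the paper's.

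\textbf{The paper's route.} The paper never forms a discrete-time autocorrelation. It writes $\phi(t,\tau)=\mathbb{E}(s^*(t)s(t+\tau))$ directly in terms of $(i,n,\hat n,l,k)$. It then averages over one period $NMT_s$, which gives a finite sum of pulse autocorrelations shifted by $(n-\hat n)MT_s$, and takes the Fourier transform. Each shift becomes a phase, so the double sum over $n,\hat n$ factors immediately into $\bigl|\sum_{n=0}^{N-1}e^{-\jmath 2\pi(k/N-MT_sf)n}\bigr|^2|G_I(f)|^2$. That is exactly the ``periodogram per delay row'' shortcut you mention at the end. It needs neither the $N-|d|$ edge count nor the Fej\'er identity.

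\textbf{Your route.} You split the proof into two pieces. The first is a purely discrete computation: the period-averaged $\bar R_s$, supported on multiples of $M$ with triangular weights, followed by a DTFT. The second is a general interpolation lemma, $\bar R_c(\tau)=\frac{1}{T_s}\sum_m \bar R_s[m]\int g_I(t+\tau-mT_s)g_I^*(t)\,dt$. This lemma holds for any sequence that is cyclostationary in $\eta$, because the block translates of the averaging window $[0,MNT_s)$ tile $\mathbb{R}$.

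\textbf{What each buys.} Your route does three things the paper's does not:
\begin{itemize}
\item It actually establishes that $P_d(f)$ is the PSD of the sequence $s_\eta$; the paper gets this only by formally setting $g_I=\delta$.
\item It makes the $\frac{1}{MT_s}$-periodicity visible directly from the lag support.
\item It shows that the textbook product formula does extend to the cyclostationary case once the sequence autocorrelation is period-averaged. This sharpens the paper's remark that the WSS result cannot be applied directly.
\end{itemize}
The paper's route is shorter.

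\textbf{The interchange in Step 3.} The interchange you worry about is harmless. Since $\bar R_s$ has finite support $|m|\le (N-1)M$, the only infinite interchange is the block-tiling of the sum with the integral. Fubini justifies this for $g_I\in L^2$, since $\int|g_I(u+\tau-mT_s)||g_I(u)|\,du\le\|g_I\|_2^2$.
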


\begin{proof}
See Appendix B.
\end{proof}

Note that although the PSD in \eqref{PSD} for the continuous-time signal $s(t)$ is the same as that of the conventional one in \cite{proakis2008digital} as the product of the PSD of the discrete-time signal and the interpolation pulse $g_I(t)$,  its proof is different and we cannot simply apply that in \cite{proakis2008digital} . This is because in \cite{proakis2008digital} , the discrete-time signal needs to be wide sense stationary (WSS), for example when the information symbol sequence $x_{i,l,k}$ is assumed  i.i.d. across all three indices $i$, $l$, and $k$.   Since in our paper, we are interested in bandwidth allocation for OTFS signals, the information symbol sequence $x_{i,l,k}$ may not be i.i.d. across the indices $l$ and $k$ for any fixed OTFS symbol index $i$.  Thus, the discrete-time signal $s_\eta$ in \eqref{Stime} is not WSS in general and is only cyclostationary.

Also note that the term $  \frac{ \sin(\pi(k-fMNT_s) ) }{N \sin(\frac{ \pi}{N} (k-fMNT_s) ) } $  in Eq. \eqref{PSD} is a Dirichlet function, or periodic/aliased sinc function \cite{fessler2003nonuniform}.
Thus, Eq. \eqref{PSD} is, in essence, the interpolation of the discrete sequence $\{ \frac{\sigma_{k}^2}{T_s} \}$ with the squared Dirichlet kernel, i.e.,
\begin{align}
D^2_N(k-fMNT_s) =   \left(\frac{ \sin(\pi(k-fMNT_s) ) }{N \sin(\frac{ \pi}{N} (k-fMNT_s) ) } \right)^2  \label{sqDir}
\end{align}
under the windowing effects of the interpolation filter $g_I(t)$, whose Fourier transform is $G_I(f)$. The squared Dirichlet kernel
\begin{align}
    D^2_N(x) \triangleq \frac{\sin^2(\pi x)}{N^2 \sin^2(\frac{\pi x}{N})} \label{Dir1}
\end{align}
is a periodic function with the period $N$, and when $x=iN, i\in\mathbb{Z}$, the squared Dirichlet kernel reaches its peak $D^2_N(x) = 1$. Thus, the term \eqref{sqDir}  is a periodic function with the period $\frac{1}{MT_s}$, and $D^2_N(k-fMNT_s) = 1 $, when $f =\frac{k-iN}{MNT_s} $.

Further, we set $k=0$ and plot the squared Dirichlet kernel $D^2_N(fMNT_s)$ with different parameters to study its shape in \figref{DirichletPlot}.
\begin{itemize}
\item When $M=1, N=16$, $D^2_{N=16}(16fT_s)$ is periodic with the period $\frac{1}{MT_s} = \frac{1}{T_s}$ and the width of the main lobe, a.k.a., spectral spike, is $\frac{2}{MNT_s} = \frac{1}{8T_s}$. Note that, in this case,  there is only $M=1$ spike within the range $[-1/2T_s, 1/2T_s)$. OTFS turns to be OFDM, and the squared Dirichlet kernel corresponds to the spectrum of a subcarrier. Recall \eqref{PSD}, the weight of the squared Dirichlet kernel $D^2_{N=16}(16fT_s)$ that forms PSD is determined by
     \begin{align}
        \sigma_{0}^2  =  \lim_{I \rightarrow \infty} \frac{\sum_{i= 0}^{I-1} |x_{i, 0, 0}|^2}{I} \notag
      \end{align}
     which means when the information signal $x_{i, 0, 0}$ is set to $0$, the spectral range
     \begin{align}
     \bigcup\limits_{\ell \in \mathbb{Z}}  [-\frac{1}{16T_s} +  \ell \frac{1}{T_s}, \frac{1}{16T_s}+\ell \frac{1}{T_s}  ) \notag
     \end{align}
      remains unoccupied.

\item When $M=2, N=8$, $D^2_{N=8}(16fT_s)$ is periodic with the period $\frac{1}{MT_s} = \frac{1}{2T_s}$ and the width of the main lobe is $\frac{2}{MNT_s} = \frac{1}{8T_s}$. In this case,  there are $M=2$ spikes within the range $[-1/2T_s, 1/2T_s)$. Recall \eqref{PSD}, the weight of the squared Dirichlet kernel $D^2_{N=8}(16fT_s)$ that forms PSD is determined by
     \begin{align}
        \sigma_{0}^2  =  \lim_{I \rightarrow \infty} \frac{\sum_{i= 0}^{I-1} \left(|x_{i, 0, 0}|^2+|x_{i, 1, 0}|^2\right)} {2I} \notag
      \end{align}
     which means when the information signals $x_{i, 0, 0}, x_{i, 1, 0}$ are both set to $0$, the spectral range
    \begin{align}
        \bigcup\limits_{\ell \in \mathbb{Z}}  [-\frac{1}{16T_s} +  \ell \frac{1}{2T_s}, \frac{1}{16T_s}+\ell \frac{1}{2T_s}  ) \notag
    \end{align}
  remains unoccupied.

\begin{figure}[tp]{
\begin{center}{\includegraphics[width=7cm ]{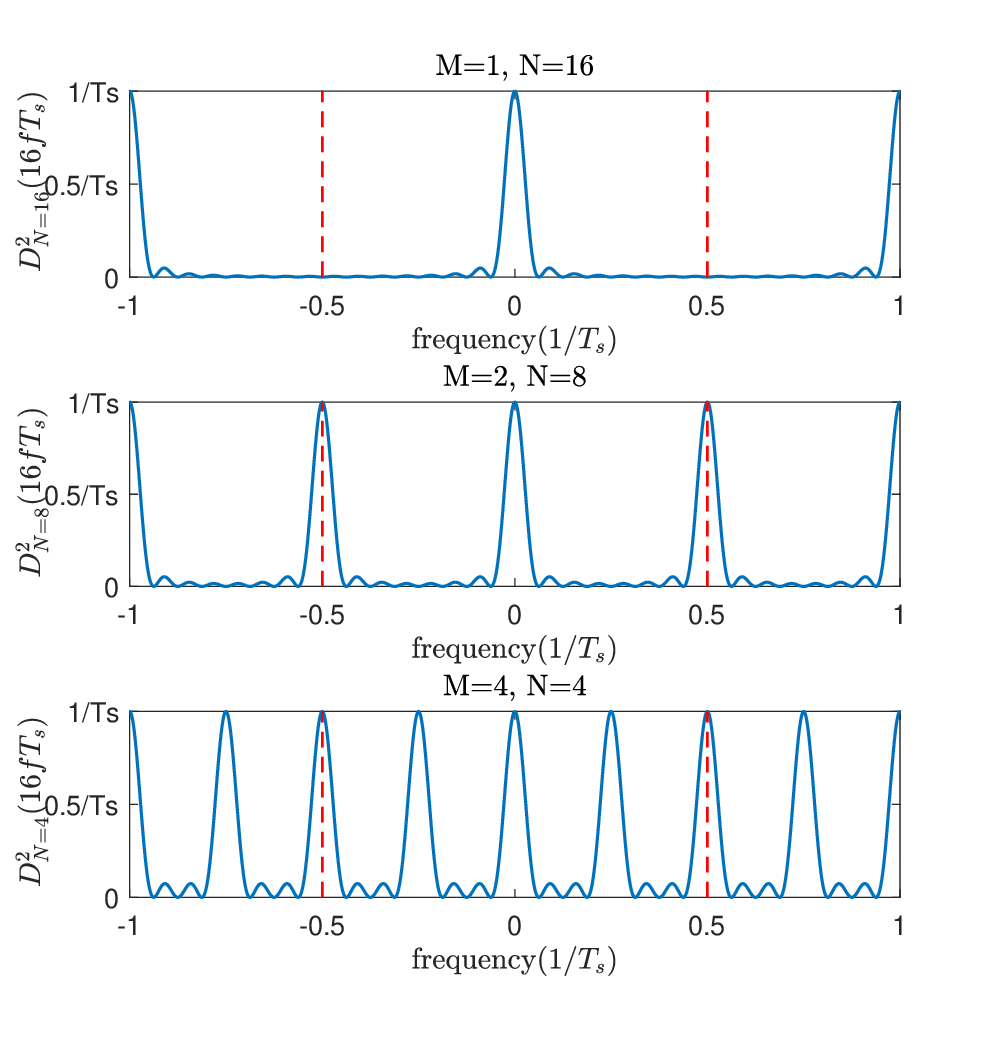}}
\caption{Plot of the squared Dirichlet kernel $D^2_N(fMNT_s)$ with different parameters. (Note that there are $M$ spectral spike/spikes within the range $[-1/2T_s, 1/2T_s)$.) }\label{DirichletPlot}
\end{center}}
\end{figure}

\item When $M=4, N=4$, $D^2_{N=4}(16fT_s)$ is periodic with the period $\frac{1}{MT_s} = \frac{1}{4T_s}$ and the width of the main lobe is $\frac{2}{MNT_s} = \frac{1}{8T_s}$.   In this case,  there are $M=4$ spikes within the range $[-1/2T_s, 1/2T_s)$. Recall \eqref{PSD}, the weight of the squared Dirichlet kernel $D^2_{N=4}(16fT_s)$ that forms PSD is determined by
     \begin{align}
     &\sigma_{0}^2 \notag \\
      =  &\lim_{I \rightarrow \infty} \frac{\sum_{i= 0}^{I-1} \left(|x_{i, 0, 0}|^2+|x_{i, 1, 0}|^2+|x_{i, 2, 0}|^2+|x_{i, 3, 0}|^2\right)} {4I} \notag
      \end{align}
     which means when the information signals $x_{i, 0, 0}, x_{i, 1, 0}, x_{i, 2, 0}, x_{i, 3, 0}$ are all set to $0$, the spectral range
    \begin{align}
        \bigcup\limits_{\ell \in \mathbb{Z}}  [-\frac{1}{16T_s} +  \ell \frac{1}{4T_s}, \frac{1}{16T_s}+\ell \frac{1}{4T_s}  ) \notag
    \end{align}
remains unoccupied.
\end{itemize}

\begin{remark} {\rm
According to \cite{xia2001precoded}, $[x_{i, 0, 0}, x_{i, 1, 0}]^T$ when $M=2, N=8$ and   $[x_{i, 0, 0}, x_{i, 1, 0},x_{i, 2, 0}, x_{i, 3, 0}]^T$ when $M=4, N=4$ are the signal vectors in VOFDM. It aligns with the conclusion derived in the analysis of discrete spectrum of VOFDM  that if a  vector of information symbols is set to $0$,  then a corresponding vector of the same size of the discrete VOFDM signal spectrum is $0$ as well, and  the components of the zero vector are not together but evenly distributed in the spectrum \cite{xia2024discrete}. }
\end{remark}

\subsection{On the Impacts of Interpolation Filter}

The construction of the continuous-time OTFS signal from its samples, i.e., the discrete-time OTFS,  can be expressed as the convolution of the impulse train with the interpolation filter $g_I(t)$. Hence,  \eqref{OTFSbaseband} can be represented as
\begin{align}
{s}(t)  =  \sum_{\eta= -\infty}^{\infty}  {s}_{\eta} \delta_I( t- \eta T_s ) \circledast g_I(t) \notag
\end{align}
where $\circledast$ stands for linear convolution. According to sampling theory \cite{kester2009oversampling, marks2012introduction}, the practical interpolation filter $g_I(t)$ should be a low-pass analog filter\footnotemark, because the output signal ${s}(t)$ must be bandlimited, to prevent imaging. In addition, a wireless communication system is inherently a bandlimited system and must comply with government regulations regarding radio frequencies.

\footnotetext{The pulse of the DAC and the subsequent low pass filter collectively determine the interpolation filter $g_I(t)$. }


To study the impacts of interpolation filter on the PSD of the OTFS signal, we examine its PSD  when $g_I(t)$ is Dirac delta function, sinc function and rectangular function, respectively.
\begin{itemize}
\item \emph{Dirac delta interpolation filter:} When $g_I(t) = \delta(t)$, ${s}_i(t)$ is the discrete-time OTFS, and then the PSD is represented by
\begin{align}
P_c(f) = \sum_{k=0}^{N-1}  \frac{\sigma_{k}^2}{T_s}   \left( \frac{ \sin(\pi(k-fMNT_s) ) }{N \sin(\frac{ \pi}{N} (k-fMNT_s) ) } \right)^2
\end{align}
According to the properties of the squared Dirichlet kernel, the PSD of discrete-time OTFS signals is a periodic function with the period $\frac{1}{MT_s}$,  spanning the entire spectrum.

\item \emph{Sinc interpolation filter:} When $g_I(t) = {\rm sinc}_{\pi}(\frac{t}{T_s})$, $s_i(t)$ is the ideally reconstructed OTFS, which is also known as Whittaker-Shannon interpolation \cite{marks2012introduction}.
     Then, the PSD becomes
\begin{align}
&\quad  {P}_c(f) \notag \\
= & \sum_{k=0}^{N-1}   \frac{\sigma_{k}^2}{T_s}   \left( \frac{ \sin(\pi(k-fMNT_s) ) }{N \sin(\frac{ \pi}{N} (k-fMNT_s) ) } \right)^2   |{\rm rect}(fT_s)|^2 \label{sincRec}
\end{align}
where
\begin{subequations}
\begin{align}
&{\rm sinc}_{\pi}\left(x\right) = \left \{\begin{array}{cc}
                                      1, & {\rm if} \;  x=0, \\
                                      \frac{\sin(\pi x)}{\pi x}, & {\rm if} \; x \neq 0,
                                    \end{array} \notag
    \right.  \\
&{\rm rect}\left(x \right) = \left \{\begin{array}{cc}
                                      0, & {\rm if} \; |x|> \frac{1}{2}, \\
                                      \frac{1}{2}, & {\rm if} \; |x|= \frac{1}{2}, \\
                                      1, & {\rm if} \; |x|< \frac{1}{2}
                                    \end{array}     \right.      \notag
\end{align}
\end{subequations}

The reconstruction filter of sinc function filters the spurious high-frequency ``mirrors" corresponding to  $|f|>\frac{1}{2T_s}$, with an ideal brick-wall-shape low-pass filter. However, sinc function has an infinite response in time domain, which is impossible to perform in practice.

\item \emph{Rectangular interpolation filter:}  When $g_I(t) = {\rm rect}(\frac{t}{T_s})$, $s(t)$ is the sample and hold OTFS, and then the PSD becomes
\begin{align}
&\quad {P}_c(f) \notag \\
= &  \sum_{k=0}^{N-1}   \frac{\sigma_{k}^2}{T_s}   \left( \frac{ \sin(\pi(k-fMNT_s) ) }{N \sin(\frac{ \pi}{N} (k-fMNT_s) ) } \right)^2   |{\rm sinc}_{\pi}(fT_s)|^2
\end{align}
Since the Fourier transform of the rectangular function is the sinc function that has infinite two-sided extent,  the sample and hold OTFS is not bandlimited, and its PSD includes the high frequency images beyond the spectral range $[-\frac{1}{2T_s}, \frac{1}{2T_s})$.  In practice, the rectangular function originates from the rectangular pulse of DAC. Typically, an analog low-pass filter is used to smooth the output of DAC and   suppress the high-frequency images, approximating an ideal brick-wall frequency response.
\end{itemize}
Since this paper focuses on the spectral properties of the OTFS signal, particularly its differences from the OFDM signal, we omit further discussions on other types of interpolation filters, which are universally applicable in digital-to-analog conversions for both OTFS and OFDM signals.

\begin{figure}[tp]{
\begin{center}{\includegraphics[width=6.5cm ]{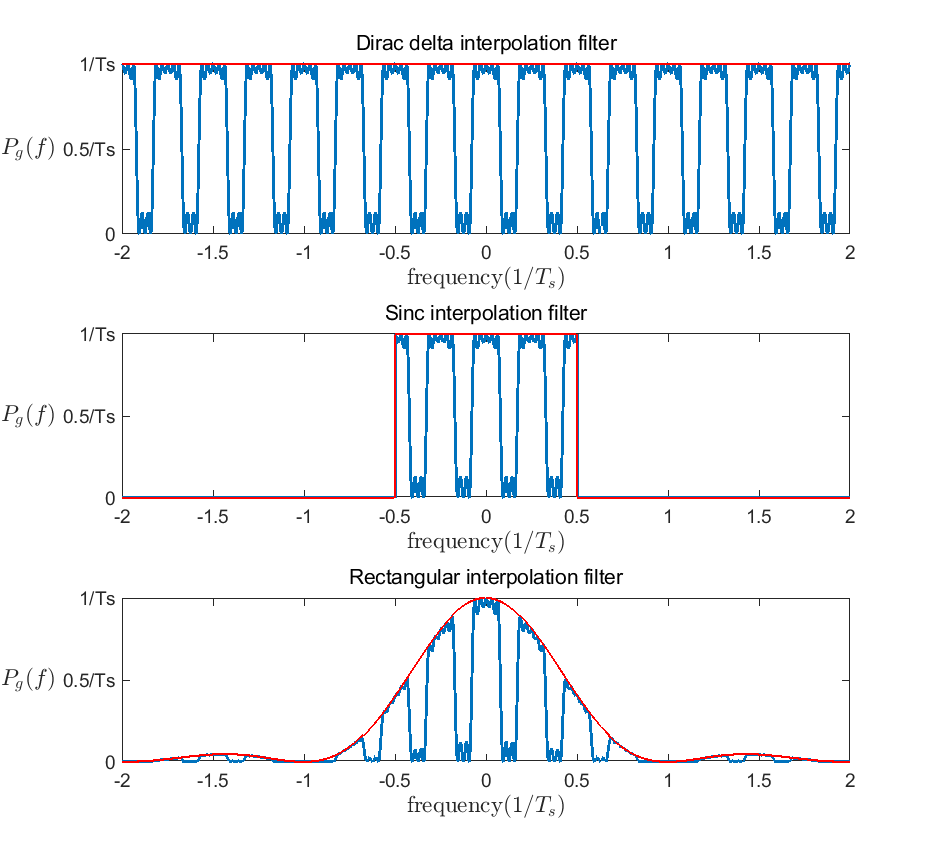}}
\caption{The PSDs of OTFS signals with the different interpolation filters, where the Dirac delta interpolation filter corresponds to discrete OTFS signal, the sinc interpolation filter corresponds to ideal digital-to-analog conversion, and the rectangular filter models the sample-and-hold operation of a DAC.}\label{PSDOTFS}
\end{center}}
\end{figure}

Next, we use an example to illustrate the PSD of OTFS signals.

\textbf{Example 1:} When $N=8$, $M=4$,  $\sigma_k^2 =  1$ for $k \in \{ 0, 1, 2 ,6, 7\}$,  and $\sigma_k^2 = 0$ for  $k \in \{ 3, 4, 5\}$, the PSDs of OTFS signals with the aforementioned interpolation filters are shown in \figref{PSDOTFS}.

\begin{itemize}
\item \emph{Dirac delta interpolation filter:} The spectrum is periodic with the period $\frac{1}{MT_s} = \frac{1}{4T_s}$ and the width of the main lobe is $\frac{2}{MNT_s} = \frac{1}{16T_s}$. Since  $\sigma_k^2 =  1$ for $k \in \{ 0, 1, 2 ,6, 7\}$,  and $\sigma_k^2 = 0$ for  $k \in \{ 3, 4, 5\}$, the occupied spectral range is
\begin{align}
    \bigcup\limits_{\ell \in \mathbb{Z}}  [-\frac{5}{32T_s} +  \ell \frac{1}{4T_s}, \frac{5}{32T_s}+\ell \frac{1}{4T_s}  ) \notag
\end{align}
\item  \emph{Sinc interpolation filter:} The brick-wall-shape low-pass filter ideally filters the spurious high-frequency ``mirrors" corresponding to  $|f|>\frac{1}{2T_s}$, and thus  the occupied spectral range is
\begin{align}
&[-\frac{1}{2T_s}, -\frac{11}{32T_s}) \cup  [-\frac{13}{32T_s}, -\frac{3}{32T_s})   \cup  [-\frac{5}{32T_s}, \frac{5}{32T_s})  \cup \notag \\
& [ \frac{3}{32T_s}, \frac{13}{32T_s}) \cup [\frac{11}{32T_s}, \frac{1}{2T_s}) \notag
\end{align}

\item  \emph{Rectangular interpolation filter:} The interpolation filter amplitude decreases by $3.92$ dB  at $|f| = \frac{1}{2 T_s}$, and the spurious high-frequency ``mirrors"  remain partially present, resulting in distortion in the desired reconstructed waveform.
\end{itemize}

\begin{remark} {\rm
Note that, in practice, a favorable interpolation filter should have a flat passband and a high roll-off to admit the desired frequency components within the desired bandwidth and rejects the unwanted high-frequency ``mirrors".  However, the periodic spectrum of OTFS signals results in the dispersive spread of frequency components over the spectral range $[-\frac{1}{2T_s}, \frac{1}{2T_s})$. This behavior resembles that of a single-carrier system, thereby preventing OTFS waveforms from inheriting OFDM’s key advantage in flexibly managing the signal spectrum.} 
\end{remark}

\section{PSD Comparison of OTFS and OFDM Signals}

In this section, we explore the intrinsic relationships between the PSDs of OTFS and OFDM signals, from which one will see the result in Prop. 2 as well.

For the purpose of comparison, we use the matrix ${\mathbf{X}_i} \in \mathbb{C}^{M \times N}$ to represent the $i$-th information symbol matrix corresponding to $M$ consecutive OFDM symbols, each with $N$ subcarriers, which accords with the $i$-th OTFS symbol, in order to be consistent with signaling of OTFS signals. The same as the setting in Prop. 2, we assume that the information  symbols $x_{i,l,k}$ are i.i.d. with zero mean and variance $\mathbb{E}\left( {x}^*_{i,l,k} {x}_{{i}, {l}, {k}} \right) = \sigma^2_{l,k}$ with respect to the index $i$, and are independent across the indices $(l,k)$, where $l, 0\leq l\leq M-1,$ is the index of $M$ consecutive OFDM symbols in the $i$-th block, and $k$ is the subcarrier index in an OFDM symbol.

\subsection{The Waveform of OFDM Baseband Signal Generated by DAC}

The primary distinction between OFDM and OTFS signals lies in their vectorization manners. Specifically, the vector-form of the discrete baseband signal of OFDM is obtained by row-wise vectorizing the matrix in \eqref{Smatrix}, i.e.,
\begin{align}
 \widehat{\mathbf{s}}_i  = {\rm{vec}}( {\mathbf{S}}_i^T  ) = (\mathbf{I}_M \otimes \mathbf{F}_N^*) {\rm{vec}}( {\mathbf{X}}_i^T)
\end{align}
Feed the discrete-time OFDM signal to a DAC,  the analog signal is obtained as
\begin{align}
& \qquad  \widehat{s}(t) \notag \\
&= \sum_{i= -\infty}^{\infty}  \sum_{n= 0}^{N-1} \sum_{l=0}^{M-1}  {s}_{i, l,n} g_I( t- (iMN+lN+n) T_s ) \notag \\
& = \sum_{i= -\infty}^{\infty}  \sum_{n= 0}^{N-1} \sum_{l=0}^{M-1}  \sum_{k=0}^{N-1}  \frac{{x}_{i, l,k}}{\sqrt{N}}    e^{\jmath 2\pi \frac{k n}{N} }     \cdot   \notag \\
& \qquad \qquad \qquad  g_I( t- (iMN +lN+n) T_s) \label{OFDMbaseband}
\end{align}
The procedures of generating the continuous-time OFDM signal are given in \figref{OFDMwaveform}.


\begin{figure}[tp]{
\begin{center}{\includegraphics[width=7cm ]{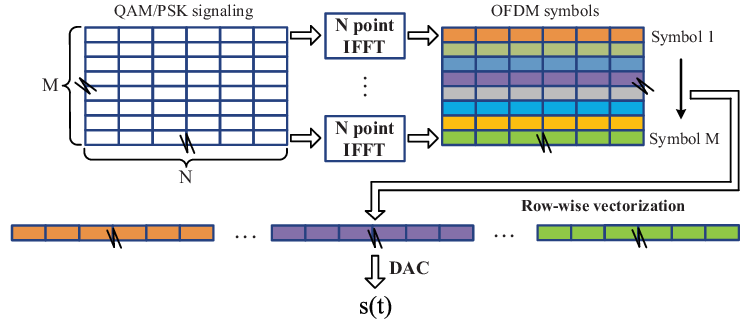}}
\caption{The procedures of generating the continuous-time OFDM signal}\label{OFDMwaveform}
\end{center}}
\end{figure}

\subsection{The PSD of OFDM Signal Generated by DAC}
In the following Prop. 3, we have the PSD expression of the OFDM signal.

\begin{proposition}{\rm
The PSD of the continuous-time OFDM signal $\widehat{s}(t)$ in \eqref{OFDMbaseband} is given by
\begin{align}
 \widehat{P}_c(f) =    \sum_{k=0}^{N-1}  \frac{\widehat{\sigma}_k^2}{T_s}   \left( \frac{ \sin( \pi (k-fNT_s) ) }{ N\sin(\frac{ \pi}{N} (k-fNT_s) ) } \right)^2 |{G}_I(f) |^2
\end{align}
where  $ \widehat{\sigma}^2_{k}  = \sum_{l=0}^{M-1}  \frac{\sigma^2_{l,k}}{M} $  is the signal power of the $k$-th subcarrier. }
\end{proposition}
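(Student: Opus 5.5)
The plan mirrors the cyclostationary argument used for Prop.~2. First, observe that $\widehat{s}(t)$ in \eqref{OFDMbaseband} is cyclostationary with period $MNT_s$. The block index $i$ enters only through the shift $iMNT_s$, and the symbols $x_{i,l,k}$ are i.i.d. in $i$. I therefore compute the time-averaged autocorrelation
\begin{align}
\bar{R}(\tau)=\frac{1}{MNT_s}\int_0^{MNT_s}\mathbb{E}\big(\widehat{s}(t+\tau)\widehat{s}^*(t)\big)\,dt \notag
\end{align}
and take its Fourier transform to obtain $\widehat{P}_c(f)$.

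To form $\mathbb{E}(\widehat{s}(t+\tau)\widehat{s}^*(t))$, I substitute the quadruple sum twice. Zero mean and independence across $(i,l,k)$ kill every cross term unless $i=i'$, $l=l'$, $k=k'$, each surviving term carrying weight $\sigma^2_{l,k}$. What remains is
\begin{align}
\sum_{i}\sum_{l,k}\frac{\sigma^2_{l,k}}{N}\sum_{n,n'}e^{\jmath2\pi k(n-n')/N}\,g_I(t+\tau-(iMN+lN+n)T_s)\,g_I^*(t-(iMN+lN+n')T_s). \notag
\end{align}

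Next, I average over $t\in[0,MNT_s)$. The sum over $i$ combined with this averaging turns into an integral over all $t$, with prefactor $1/(MNT_s)$. The shift $lNT_s$ is then absorbed by the change of variables $t\to t+lNT_s$, so the $l$-dependence collapses into $\sum_l\sigma^2_{l,k}=M\widehat{\sigma}_k^2$. The resulting factor $M/(MNT_s)=1/(NT_s)$ times $1/N$ from the normalization gives
\begin{align}
\bar{R}(\tau)=\sum_k\frac{\widehat{\sigma}_k^2}{N^2T_s}\sum_{n,n'}e^{\jmath2\pi k(n-n')/N}\int g_I(u+\tau-nT_s)g_I^*(u-n'T_s)\,du. \notag
\end{align}
Taking the Fourier transform in $\tau$, each integral contributes $|G_I(f)|^2e^{-\jmath2\pi f(n-n')T_s}$. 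The double sum then factors as $\big|\sum_{n=0}^{N-1}e^{\jmath2\pi n(k/N-fT_s)}\big|^2$, which by the geometric series equals $\sin^2(\pi(k-fNT_s))/\sin^2(\frac{\pi}{N}(k-fNT_s))$. With the $1/N^2$ this is exactly the squared Dirichlet kernel $D_N^2(k-fNT_s)$, giving the claimed expression.

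The main obstacle is the bookkeeping in the averaging step: converting the sum over $i$ plus the period average into a single integral, and checking that the $lNT_s$ offsets really make every OFDM symbol in the block contribute identically. This works because stationarity in $i$ holds and only the power per $(l,k)$ survives. Alternatively, I could note that $\widehat{s}(t)$ has the same structure as \eqref{OTFSbaseband} with the interleaving replaced by contiguous blocks, and reuse Appendix B with $M$ effectively set to $1$ in the Dirichlet argument. I would present the direct computation, though, since the index mapping differs.
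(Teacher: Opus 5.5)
Your proposal is correct and follows the route the paper indicates for Prop.~3: the cyclostationary, time-averaged-autocorrelation computation of Appendix~B, adapted to the contiguous OFDM index map $iMN+lN+n$. In particular, your substitution $u=t-lNT_s$ is what makes every OFDM symbol in the block contribute identically, and the constants check out, since $\frac{1}{MNT_s}\cdot M\cdot\frac{1}{N}=\frac{1}{N^2T_s}$, giving $\widehat{\sigma}_k^2$ and the squared Dirichlet kernel in $k-fNT_s$.
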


\begin{proof}
Prop. 3 can be derived using the same method as in Prop. 2. In addition, the PSD of OFDM signal is well-established and the PSD of OFDM signal generated by DAC can be found in \cite{van2010analytical}.
\end{proof}

\subsection{A Study on the Connection Between OTFS and OFDM}

Although the PSD of OTFS signals has been derived directly in Prop. 2, in this subsection, we will see it again from its connection with that of the conventional OFDM signals, which may help understanding OTFS signals better. The relationship between an OTFS signal and its CEP-OFDM  signals in both time domain and frequency domain is illustrated in \figref{OFDMtoOTFS}. The time domain of \figref{OFDMtoOTFS} echoes with \figref{VOFDMwaveform}, and their connection in frequency domain will be investigated in this subsection.

\begin{figure*}[tp]{
\begin{center} 
{\includegraphics[width=14cm]{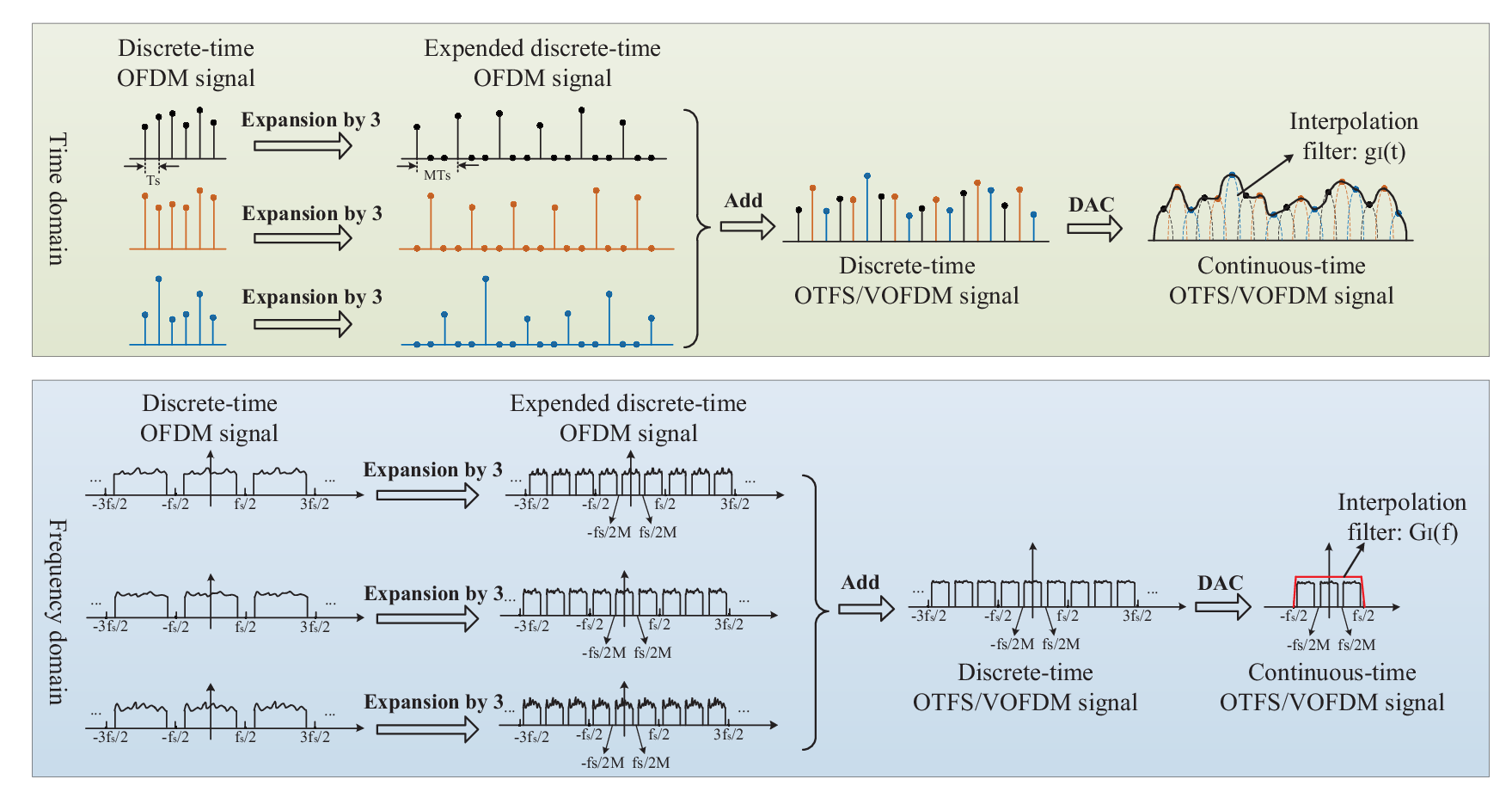}}
\caption{The connection between the OTFS signal and its CEP-OFDM in time and frequency domain}\label{OFDMtoOTFS}
\end{center}}
\end{figure*}

Note that the baseband signal of the OTFS symbol in \eqref{OTFSbaseband} can be represented as a summation of $M$ independent CEP-OFDM signals, i.e.,
\begin{align}
   {s}(t) & =  \sum_{l=0}^{M-1}  \underbrace{ \sum_{i= -\infty}^{\infty} \sum_{n= 0}^{N-1}   {s}_{i,l,n} g_I( t-(iMN+nM+l ) T_s)}_{{\rm  The \; } l {\rm -th \; CEP-OFDM \; signal\;\;}  \widehat{s}^{(l)}(t)} \notag \\
&=  \sum_{l=0}^{M-1}  \underbrace{\sum_{i= -\infty}^{\infty} \sum_{\hat{n}= 0}^{MN-1}   \widehat{s}^{(l)}_{i,\hat{n}} g_I( t - (iMN+ \hat{n}) T_s )}_{ {\rm The \;} l {\rm -th \; CEP-OFDM \; signal\;\;} \widehat{s}^{(l)}(t)} \label{ZSOFDM}
\end{align}
where we term $ \widehat{s}^{(l)}(t)$ as the $l$-th CEP-OFDM signal. For the $i$-th OTFS symbol, its discrete-time sequence $\{  \widehat{s}^{(l)}_{i,\hat{n}}\}|_{\hat{n}=0,\cdots, NM-1} $ is given by
\begin{align}
&\{ \widehat{s}^{(l)}_{i,\hat{n}}\}|_{\hat{n}=0,\cdots, NM-1} \notag \\
=& \Big \{\underbrace{0, \cdots, 0}_{l},  {s}_{i,l,0}, \underbrace{0, \cdots, 0}_{M-l-1}, \cdots, \underbrace{0, \cdots, 0}_{l},  {s}_{i,l,n}, \underbrace{0, \cdots, 0}_{M-l-1},  \notag \\
  &   \cdots, \underbrace{0, \cdots, 0}_{l},  {s}_{i,l,N-1}, \underbrace{0, \cdots, 0}_{M-l-1} \Big\} \label{ZeroStuffed}
\end{align}
which, for fixed $l$ and $i$, is obtained by creating a new sequence $\{ \widehat{s}^{(l)}_{i,\hat{n}}\}|_{\hat{n}=0,\cdots, NM-1}$, comprising the original samples $\{s_{i,l,n}\}|_{n=0,\cdots, N-1}$, separated by $M-1$ zeros and shifted by $l$.  This process aligns with the definition of \emph{expansion} in multirate signal processing \cite{harris2022multirate} in discrete domain. Therefore, we refer to $ \widehat{s}^{(l)}(t)$, which is obtained by passing the expanded discrete-time OFDM sequence ${ \widehat{s}^{(l)}_{i,\hat{n}}}$ through the DAC operating at a sampling rate of $f_s$, as the \emph{CEP-OFDM signal}.  This can also been clearly seen from the VOFDM point of view \cite{xia2000precoded, xia2001precoded}.

Expansion by $M$ in discrete domain indeed reduces the sampling rate from $f_s$ to $\frac{f_s}{M}$, which means the spectrum of the discrete-time CEP-OFDM sequence contracts by the factor $M$. In addition, the interpolation filter $g_I(t)$ remains unchanged. Therefore, the PSD of a CEP-OFDM signal follows immediately from Prop. 3 by substituting  $T_s$ with $MT_s$.

\begin{corollary} {\rm
The PSD of a CEP-OFDM signal $ \widehat{s}^{(l)}(t)$ with $N$ subcarriers and expanded by $M$, as expressed in \eqref{ZSOFDM}, is given by
\begin{align}
 &\widehat{P}_c^{(l)}(f) =  \sum_{k=0}^{N-1}  \frac{{\sigma}_{l,k}^2 }{M T_s} \left( \frac{ \sin( \pi (k-fMNT_s) ) }{ N\sin(\frac{ \pi}{N} (k-fMNT_s) ) } \right)^2 |G_I(f) |^2 \label{ZSOFDMPSD}
\end{align}}
\end{corollary}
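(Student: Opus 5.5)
The plan is to compute the PSD of $\widehat{s}^{(l)}(t)$ directly, as the time-averaged autocorrelation over one cyclostationary period, and to follow the same steps as in the proof of Prop. 2 (Appendix B). Write
\begin{align}
\widehat{s}^{(l)}(t)=\sum_{i}\sum_{n=0}^{N-1}\sum_{k=0}^{N-1}\frac{x_{i,l,k}}{\sqrt{N}}e^{\jmath 2\pi\frac{kn}{N}}\, g_I\big(t-(iMN+nM+l)T_s\big). \notag
\end{align}
This has exactly the form of the OFDM signal in \eqref{OFDMbaseband} restricted to a single row $l$, with three changes: the sample spacing is $MT_s$ instead of $T_s$, the block period is $MNT_s$, and there is a constant time offset $lT_s$. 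A time shift does not change a PSD, so the offset $lT_s$ can be dropped at once. This also gives a direct check of the substitution $T_s\to MT_s$ claimed before the corollary.

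\textbf{Step 1: autocorrelation.} Use the independence of $x_{i,l,k}$ across $i$ and $k$ and the zero mean. Then
\begin{align}
\mathbb{E}\big[\widehat{s}^{(l)}(t+\tau)\widehat{s}^{(l)*}(t)\big]=\sum_i\sum_k\frac{\sigma^2_{l,k}}{N}\,a_{i,k}(t+\tau)\,a^*_{i,k}(t), \notag
\end{align}
where
\begin{align}
a_{i,k}(t)=\sum_n e^{\jmath2\pi kn/N}\,g_I\big(t-(iMN+nM)T_s\big). \notag
\end{align}
This depends on $t$ only through $t-iMNT_s$, so the signal is cyclostationary with period $MNT_s$.

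\textbf{Step 2: average over one period.} Average over $t\in[0,MNT_s)$. The sum over $i$ combined with the period integral becomes an integral over all $t$. This gives
\begin{align}
\bar R(\tau)=\frac{1}{MNT_s}\sum_k\frac{\sigma^2_{l,k}}{N}\int a_{0,k}(t+\tau)\,a_{0,k}^*(t)\,dt. \notag
\end{align}

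\textbf{Step 3: Fourier transform.} By the Wiener--Khinchin relation, the transform of $\bar R(\tau)$ is
\begin{align}
\frac{1}{MNT_s}\sum_k\frac{\sigma^2_{l,k}}{N}\,|A_{0,k}(f)|^2, \notag
\end{align}
with
\begin{align}
A_{0,k}(f)=G_I(f)\sum_{n=0}^{N-1}e^{\jmath2\pi n(k/N-fMT_s)}. \notag
\end{align}
The geometric sum has magnitude
\begin{align}
\left|\frac{\sin(\pi(k-fMNT_s))}{\sin(\frac{\pi}{N}(k-fMNT_s))}\right|. \notag
\end{align}
So $|A_{0,k}(f)|^2=N^2D_N^2(k-fMNT_s)|G_I(f)|^2$, and multiplying by the prefactor $\frac{\sigma^2_{l,k}}{MN^2T_s}$ yields exactly \eqref{ZSOFDMPSD}.

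The main obstacle is bookkeeping of normalizations rather than any conceptual difficulty. First, the $1/\sqrt{N}$ in the symbol mapping squared against the $N^2$ from the Dirichlet kernel must leave exactly one factor $1/N^2$ inside the kernel. Second, the averaging period must be $MNT_s$ and not $NT_s$; this is what produces the $\frac{1}{MT_s}$ prefactor instead of the $\frac{1}{T_s}$ of Prop. 3.

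As a consistency check, summing \eqref{ZSOFDMPSD} over $l$ gives $\sum_k\frac{1}{T_s}\sum_l\frac{\sigma^2_{l,k}}{M}D_N^2\,|G_I|^2$, which recovers Prop. 2 with $\sigma^2_k=\sum_l\sigma^2_{l,k}/M$. This confirms that the constants are right and that the cross terms between different $l$ vanish by independence.
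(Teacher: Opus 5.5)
Your proof is correct, but it takes a different route from the paper.

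\textbf{The paper's route.} The paper does not recompute anything for the corollary. It argues that expansion by $M$ lowers the effective sampling rate from $f_s$ to $f_s/M$ while leaving $g_I(t)$ unchanged. It then reads the result off Prop.~3 by substituting $T_s \to MT_s$.

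\textbf{Your route.} You run the Appendix~B computation directly on $\widehat{s}^{(l)}(t)$. You drop the delay $lT_s$ and use independence across $i$ and $k$ to kill the cross terms. Averaging over the period $MNT_s$ merges the $i$-sum and the period integral into a single integral over $\mathbb{R}$. Fourier transforming the deterministic autocorrelation of $a_{0,k}$ then gives $|G_I(f)|^2$ times the squared Dirichlet kernel. Your constants check out: $\frac{1}{MNT_s}\cdot\frac{\sigma_{l,k}^2}{N}\cdot N^2 = \frac{\sigma_{l,k}^2}{MT_s}$.

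\textbf{What each buys.} The paper's reduction is shorter and carries the multirate intuition (spectral contraction by $M$). However, it leaves implicit exactly the points your computation makes explicit. First, $\widehat{s}^{(l)}(t)$ is an OFDM signal with one OFDM symbol per block (the $M=1$ case of Prop.~3), with sample spacing $MT_s$ and delayed by $lT_s$. So $\widehat{\sigma}_k^2$ in Prop.~3 must be read as $\sigma_{l,k}^2$, and the delay must be argued away. Second, only the symbol timing is rescaled while $G_I(f)$ is kept as is, which is where the $1/(MT_s)$ prefactor comes from. Your version is self-contained: it does not lean on Prop.~3, whose proof the paper only sketches by reference to Prop.~2 and the literature. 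By carrying $a_{0,k}^*$, it also covers a complex-valued $g_I$, which Appendix~B implicitly assumes real. The cost is redoing the Appendix~B algebra.

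Your closing consistency check, summing over $l$ to recover Prop.~2, is exactly the step the paper takes next in deriving the sum-of-PSDs identity.
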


Note that, based on \eqref{ZSOFDM}, ${s}(t)$ can be expressed as
\begin{align}
{s}(t) = \sum_{l=0}^{M-1}   \widehat{s}^{(l)}(t)
\end{align}
where, according to \eqref{ZSOFDM} and \eqref{OTFSbaseband}, $\widehat{s}^{({l})}(t) $ is further represented by
\begin{align}
& \widehat{s}^{({l})}(t)  \notag \\
=  & \sum_{i= -\infty}^{\infty}  \sum_{n= 0}^{N-1}  \sum_{k=0}^{N-1}  \frac{{x}_{i, {l},k}}{\sqrt{N}}    e^{\jmath 2\pi \frac{k n}{N} }   g_I( t- (iNM + n M +  {l})T_s)  \notag
\end{align}

When information symbols $ {\mathbf{x}}_{\hat{l}}$ and $ {\mathbf{x}}_{\check{l}}$ in the $\hat{l}$-th and $\check{l}$-th  CEP-OFDM signals, i.e., $\widehat{s}^{(\hat{l})}(t) $ and $\widehat{s}^{(\check{l})}(t) $, are of zero mean and statistically independent for any $\hat{l} \neq \check{l}$,  the PSD of the continuous-time OTFS signal is clearly the summation of the PSDs of $M$ CEP-OFDM signals, i.e.,
\begin{align}
P_c(f) =  \sum_{l=0}^{M-1} \widehat{P}_c^{(l)}(f) \label{sumPSD}
\end{align}
where $\widehat{P}_c^{(l)}(f)$ is the PSD of the $l$-th continuous-time CEP-OFDM signal. 
 
With \eqref{sumPSD} and the assumption on the information symbols $x_{i,l,k}$ that are independent across all indices $i, l, k$,  the PSD of the OTFS signal can also be represented by
\begin{align}
& P_c(f) = \sum_{l=0}^{M-1} \widehat{P}_c^{(l)}(f) \notag \\
    = &  \sum_{k=0}^{N-1} \sum_{l=0}^{M-1} \frac{{\sigma}_{l,k}^2}{M T_s}     \left( \frac{ \sin( \pi (k-fMNT_s) ) }{ N\sin(\frac{ \pi}{N} (k-fMNT_s) ) } \right)^2 |G_I(f) |^2 \notag
\end{align}
which coincides with that obtained in Prop. 2.

\begin{remark} {\rm
A noteworthy point is that the PSD of OTFS signal has the same PSD pattern with the  CEP-OFDM signals when ${\sigma}_{l,k}^2, 0\leq k \leq N-1$, are identical across $l$, indicating that the OTFS signal is essentially a time-interleaved version of $M$ OFDM signals, each with an FFT/IFFT size of $N$.}
\end{remark}

\section{Bandwidth Allocation for OTFS Signals}

In this section, we first review the bandwidth allocation for OFDM signals and then propose a null-space based linear precoding (NSLP) method for OTFS signals to realize flexible bandwidth allocation.

\subsection{A Review of Bandwidth Allocation for OFDM Signals}
For a radio system, the bandwidth of waveforms must comply with the wireless standards. Taking LTE for instance,  its channel bandwidth can be $1.4$MHz, $3$MHz, $5$MHz, $10$MHz, $15$MHz, and $20$MHz\cite{ghosh2010lte}. When the  channel bandwidth is $20$MHz and the bandwidth efficiency is $90\%$, the actually occupied bandwidth is $18$MHz.  To achieve an $18$MHz bandwidth with the subcarrier spacing of $15$kHz, a feasible solution, as suggested in \cite{ShareTechnote}, is to set the following parameters.

\emph{\textbf{Parameters}: The sampling frequency is $f_s = \frac{1}{T_s} = 30.72$MHz, the FFT size is $N_{FFT} = 2048$, with $N_1 = 1201$ occupied subcarriers carrying the QMA/PSK signals, and $N_0 = 847$ guard subcarriers set to zero.}

According to the above parameters, the subcarrier spacing $B_{sc}$ and the occupied bandwidth $B$ are given by
\begin{align}
&B_{sc} = \frac{f_s}{N_{FFT}} = 15kHz, \notag \\
&B = B_{sc} N_1 = 18.015 MHz \notag
\end{align}
From the above example, generating an OFDM waveform that meets the specific requirements of the mobile standard, such as bandwidth and subcarrier spacing, is easily realized by selecting the appropriate sampling frequency and FFT size, and setting a consecutive set of subcarriers to zero.

To summarize, the bandwidth allocation process for OFDM signals is straightforward and can be realized by the \emph{zero-setting method}, as each information symbol corresponds to a single spectral component \cite{christodoulopoulos2011elastic}. In contrast, OTFS signals involve a more complex relationship, where a vector of information symbols corresponds to a vector of spectral components in different and intertwined OFDM symbols. 

To better illustrate the advantages of OFDM over OTFS in  bandwidth allocation, we set $N=32$ to make comparisons with Example 1.

\textbf{Example 2:} When $N=32$,  ${\sigma}_k^2 =  1$ for $0\leq k \leq 9$ and  $22\leq k \leq 31$,  and $\sigma_k^2 = 0$ for  $11 \leq k \leq 21$, the PSDs of OFDM signals with different interpolation filters are shown in \figref{PSDOFDM}.

In Example 2, i.e., OFDM case, $20$ non-zero information symbols correspond to $20$ consecutive spectral components that occupy a bandwidth of $B= N_1 \frac{f_s}{N_{FFT}} = \frac{5}{8T_s} $. By contrast, in Example 1 and its corresponding \figref{PSDOTFS}, i.e., OTFS case, $20$ non-zero information symbols correspond to $20$ spatially separated spectral components that occupy the whole bandwidth of $B = \frac{1}{T_s}$. It indicates that the zero-setting method is ineffective for OTFS signals, and generating an OTFS signal with bandwidth $B$ requires an exact sampling rate of $f_s = B$. In addition, the periodically spread spectrum of OTFS imposes much stricter requirements on the interpolation filter, necessitating a flatter passband and a sharper transition band. Therefore, a flexible bandwidth allocation scheme for OTFS signals is crucial  for its practical applications.

\begin{figure}[tp]{
\begin{center}{\includegraphics[width=6.5cm ]{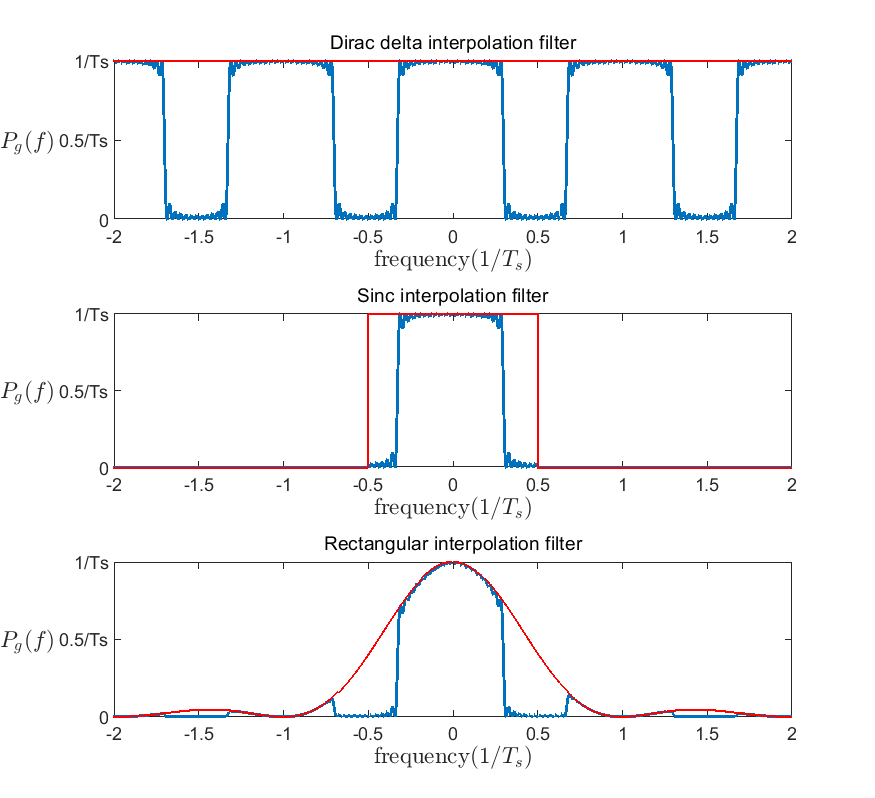}}
\caption{The PSDs of OFDM signals with different interpolation filters, where the Dirac delta interpolation filter corresponds to discrete OFDM signal, the sinc interpolation filter corresponds to ideal digital-to-analog conversion, and the rectangular filter models the sample-and-hold operation of a DAC.}\label{PSDOFDM}
\end{center}}
\end{figure}

\subsection{The Relationship Between Discrete-Time  OTFS Spectrum and the Information Symbols}
To flexibly allocate bandwidth, we need to obtain the relationship between the information symbols $\mathbf{X}_i\in \mathbb{C}^{M \times N}$ and the spectrum of discrete-time OTFS signals. In what follows, since we study a discrete-time OTFS symbol independent of its index, for notational convenience, we drop the OTFS symbol index $i$. 

The spectrum of the discrete-time OTFS symbol $\mathbf{s}$  in \eqref{OTFSdiscrete}  can be derived by applying the $MN$-point FFT, i.e.,
\begin{align}
{\mathbf{y}}  & = {\rm{FFT}_{MN}}({\mathbf{s}} )   = \mathbf{F}_{MN} (\mathbf{F}_N^H \otimes \mathbf{I}_M) {\rm{vec}}( {\mathbf{X}}) \label{FFTotfs}
\end{align}
where $\mathbf{F}_N$ and $\mathbf{F}_{MN}$ are the $N$-point DFT matrix and the $MN$-point DFT matrix, respectively, and
\begin{align}
&{\rm{vec}}(\mathbf{X}) = [x_{0,0}, x_{1,0}, \cdots,x_{M-1,0}, x_{0,1}, x_{1,1},\cdots,\notag \\
&\;\; x_{M-1,1},\cdots,x_{0,N-1}, x_{1,N-1}, \cdots, x_{M-1,N-1}]^T
\end{align}

According to \cite{xia2024discrete} and it is not hard to see by calculating the DFT in \eqref{FFTotfs}, the relationship between the elements of $\mathbf{y}$ and $\mathbf{X}$ is given by
\begin{align}
\mathbf{y}[mN+k] = \frac{1}{\sqrt{M}} \sum_{l=0}^{M-1}(x_{l,k}W_{MN}^{lk})W_M^{lm} \label{DiscreteRelationship}
\end{align}
where $W_{MN}  = e^{ - \frac{j2\pi}{MN}}$, $W_{M}  = e^{ - \frac{j2\pi}{M}}$, and the vector-form of \eqref{DiscreteRelationship} can be written by
\begin{align}
\mathbf{y}^{(k)} =   \frac{1}{\sqrt{M}} \mathbf{F}_M \boldsymbol{\Lambda}^{(k)} \mathbf{x}^{(k)}  \label{DiscreteRelationshipVecForm}
\end{align}
where
\begin{align}
\mathbf{x}^{(k)}  \triangleq   & (x_{0,k}, x_{1,k},  x_{2,k},\cdots,  x_{M-1,k}  )^T, \notag  \\
\mathbf{y}^{(k)} \triangleq  &\big( \mathbf{y}[k], \mathbf{y}[N+k],   \mathbf{y}[2N+k],\cdots,  \notag \\
& \quad \mathbf{y}[(M-1)N+k] \big)^T  \notag
\end{align}
are the information symbol vector and resulting discrete spectrum vector corresponding to the $k$-th subcarrier, respectively, and
\begin{align}
\boldsymbol{\Lambda}^{(k)}  \triangleq & \diag(W_{MN}^{0},W_{MN}^{k}, W_{MN}^{2k},\cdots, W_{MN}^{(M-1)k})\notag
\end{align}

\subsection{Null-space Based Linear Precoding (NSLP) for Flexible Bandwidth Allocation Scheme of OTFS Signals}
The goal of flexible bandwidth allocation is to strategically place information symbols across the $MN$ spectral components according to specific preferences, such as  
\begin{align}
& \qquad \mathbf{y} =  \Big[  \underbrace{\mathbf{y}[0], \cdots,  \mathbf{y}[|\mathcal{J}|/2-1]}_{{\rm assigned \; spectrum:}\; \mathcal{J}}, \underbrace{0,\cdots, 0,}_{{\rm spectral \; nulls:}  \; \mathcal{I},
}   \notag \\
&\qquad  \qquad  \underbrace{\mathbf{y}[  |\mathcal{I}| + |\mathcal{J}|/2 | ], \cdots,  \mathbf{y}[|\mathcal{I}|+|\mathcal{J}|-1]}_{{\rm assigned \; spectrum:}\; \mathcal{J}}  \Big]^T  \label{ZeroSpectrum}
\end{align}
where $\mathcal{I}$ is the index set of the spectral nulls, and $\mathcal{J}$ is the complement of the index set $\mathcal{I}$ that corresponds to the used spectral components.

Mathematically, NSLP is designed to realize the equation \eqref{ZeroSpectrum}. According to our analysis, the spectrum corresponding to a given subcarrier index $k$ exhibits a comb-like structure. A consecutive sequence of spectral components will extend across multiple values of $k$, and thus bandwidth allocation should be performed across multiple subcarrier indices $k$. Furthermore, since Eq. \eqref{DiscreteRelationshipVecForm} indicates that vector $\mathbf{y}^{(k)}$ is only determined by vector $\mathbf{x}^{(k)}$, bandwidth allocation that nulls the spectral components with indices in $\mathcal{I}$ can be performed vector-wisely indexed by $k$ individually and independently.

NSLP is first carried out at a specific value of $k$, and then the procedure is subsequently traversed across all $k$. Let  $\mathcal{I}_k$, where $\mathcal{I}_k  \subset \{0,1,\cdots, M-1 \}$, denote the indices of $\mathcal{I}$ that fall in the index set in the $k$-th vector, and  $\mathcal{J}_k$ denote the complement of the index set $\mathcal{I}_k$ in the set $\{0,1,\cdots, M-1\}$. Mathematically, $\mathcal{I} = \bigcup_{k=0}^{N-1} (\mathcal{I}_k N + k )$ and $ |\mathcal{J}_k| = M -  |\mathcal{I}_k|$. Let $\hat{\mathbf{x}}^{(k)}$ be a $|\mathcal{J}_k| \times 1$ information symbol vector to send, $\mathbf{P}_k$ be an $M \times |\mathcal{J}_k|$ precoding matrix, and ${\mathbf{x}}^{(k)}$ be the precoded $M \times 1$ signal to be transmitted, i.e., 
\begin{align}
{\mathbf{x}}^{(k)} = \mathbf{P}_k \hat{\mathbf{x}}^{(k)} \label{precoder}
\end{align}
such that the following two constraints hold
\begin{subequations}
\begin{align}
& {\rm Constraint \; 1:} \;\;  \left[ \mathbf{y}^{(k)} \right]_{\mathcal{I}_k} = \mathbf{0}_{|\mathcal{I}_k| \times 1}, \label{opt1b} \\
&  {\rm Constraint \; 2:} \;\;  {\rm tr}(\mathbf{P}_k^H \mathbf{P}_k) = |\mathcal{J}_k|  \label{opt1c}
\end{align} \label{opt1}
\end{subequations}
\hspace{-0.29cm} where \eqref{opt1b} is the zero spectrum property as stated in \eqref{ZeroSpectrum}, and  \eqref{opt1c} corresponds to the power constraint. In addition,  $\left[ \mathbf{y}^{(k)} \right]_{\mathcal{I}_k} \in \mathbb{C}^{|\mathcal{I}_k| \times 1}$  denotes the vector composed of the elements of $\mathbf{y}^{(k)} $ with indices in $\mathcal{I}_k$. Substituting \eqref{DiscreteRelationshipVecForm} and \eqref{precoder} into \eqref{opt1b}, it becomes
\begin{align}
{\rm Constraint \; 1:} \;  \left[ \widetilde{\mathbf{F}}^{(k)} \right]_{\mathcal{I}_k}  \mathbf{P}_k \hat{\mathbf{x}}^{(k)} = \mathbf{0}_{|\mathcal{I}_k| \times 1} \label{opt2b} 
\end{align} 
where
\begin{align}
\widetilde{\mathbf{F}}^{(k)} \triangleq \frac{1}{\sqrt{M}} \mathbf{F}_M \boldsymbol{\Lambda}^{(k)} = \left[\widetilde{\mathbf{f}}^{(k)}_0, \widetilde{\mathbf{f}}^{(k)}_1, \cdots, \widetilde{\mathbf{f}}^{(k)}_{M-1}\right]^T  \notag
\end{align}
is an $M\times M$ unitary matrix and $\left[\widetilde{\mathbf{F}}^{(k)} \right]_{\mathcal{I}_k} \in \mathbb{C}^{|\mathcal{I}_k| \times M}$ is the sub-matrix composed of the row vectors of $\widetilde{\mathbf{F}}^{(k)}$ with indices in $\mathcal{I}_k$. 

\textbf{A feasible solution:} From Constraint 1, one can see that $\mathbf{x}^{(k)} = \mathbf{P}_k \hat{\mathbf{x}}^{(k)}$  lies in the null space of $\left[  \widetilde{\mathbf{F}}^{(k)} \right]_{\mathcal{I}_k} $.  Since $\widetilde{\mathbf{F}}^{(k)}$ is an $M\times M$ unitary matrix, a solution of $\mathbf{P}_k$ that meets the above two constraints can be expressed as
\begin{align}
\mathbf{P}_k =  \left[  \widetilde{\mathbf{F}}^{(k)} \right]^H_{\mathcal{J}_k} \mathbf{U}_{|\mathcal{J}_k|} \label{solution1} 
\end{align}
where $ \left[ \widetilde{\mathbf{F}}^{(k)} \right]_{\mathcal{J}_k} \in \mathbb{C}^{|\mathcal{J}_k| \times M}$ is the sub-matrix of the unitary matrix $\widetilde{\mathbf{F}}^{(k)}$ complementary to $\left[\widetilde{\mathbf{F}}^{(k)}\right]_{\mathcal{I}_k}$, and $\mathbf{U}_{|\mathcal{J}_k|} \in  \mathbb{C}^{|\mathcal{J}_k| \times |\mathcal{J}_k| }$ is an arbitrary matrix that satisfies ${\rm tr}(\mathbf{U}_{|\mathcal{J}_k|}^H \mathbf{U}_{|\mathcal{J}_k|}) = |\mathcal{J}_k|$. For the precoder $\mathbf{P}_k$ in \eqref{solution1}, the spectrum null property, or Constraint 1, \eqref{opt1b} or \eqref{opt2b} holds, because
\begin{align}
\left[ \widetilde{\mathbf{F}}^{(k)} \right]_{\mathcal{I}_k}\mathbf{x}_i^{(k)} =
\underbrace{ \left[ \widetilde{\mathbf{F}}^{(k)} \right]_{\mathcal{I}_k} \left[ \widetilde{\mathbf{F}}^{(k)} \right]^H_{\mathcal{J}_k}}_{\mathbf{0}_{|\mathcal{I}_k| \times |\mathcal{J}_k|}}   \mathbf{U}_{|\mathcal{J}_k|} \hat{\mathbf{x}}_i^{(k)}  = \mathbf{0}_{|\mathcal{I}_k| \times 1}  \notag 
\end{align}

\begin{remark} {\rm
The computational complexity  of NSLP mainly arises from precoder design and linear precoding. The main computational cost of precoder design lies in the matrix multiplication in \eqref{solution1}, whose complexity is  ${\mathcal O}( M |\mathcal{J}_k|^2)$. Since \eqref{solution1} is  carried out across all $k$ (where  $1 \leq k \leq  N$), the computational complexity is ${\mathcal O}( MN |\mathcal{J}_k|^2)$. The computational cost of linear precoding is dominated by the matrix multiplication in \eqref{precoder}, and when performed over all $k$, the complexity is ${\mathcal O}(MN|\mathcal{J}_k|)$. Thus, the overall computational complexity is ${\mathcal O}( MN |\mathcal{J}_k|^2)$.} 
\end{remark}

\subsection{Systematic-Form Linear Precoder}

\emph{A special case} is when the precoder $\mathbf{P}_k $ has a systematic form, i.e., 
\begin{align}
\mathbf{P}_k  = \left( \begin{array}{c}
                         \mathbf{I}_{|\mathcal{J}_k|} \\
                         \mathbf{P}_{k,2}
                       \end{array}
 \right)  \label{sysform}
\end{align}
where $\mathbf{I}_{|\mathcal{J}_k|}$ is the $|\mathcal{J}_k| \times |\mathcal{J}_k|$ identity matrix, and $\mathbf{P}_{k,2}$ is an $|\mathcal{I}_k| \times |\mathcal{J}_k|$ matrix.  
By substituting \eqref{sysform} into \eqref{precoder},  the precoded signal ${\mathbf{x}}^{(k)}$ is 
\begin{align}
{\mathbf{x}}^{(k)} =\left( \begin{array}{c}
                         \hat{\mathbf{x}}^{(k)} \\
                         \mathbf{P}_{k,2}\hat{\mathbf{x}}^{(k)}
                       \end{array}
 \right) \label{precoder2}
\end{align}
whose first $|\mathcal{J}_k|$ elements are identical to the information symbol vector $\hat{\mathbf{x}}^{(k)}$. 

To derive the expression of  $\mathbf{P}_k$ to satisfy the spectrum null constraint \eqref{opt2b}, we only need to integrate  \eqref{solution1} and \eqref{sysform}, yielding 
\begin{align}
\left( \begin{array}{c}
                         \mathbf{I}_{|\mathcal{J}_k|}\mathbf{U}_{|\mathcal{J}_k|}^{-1}  \\
                         \mathbf{P}_{k,2} \mathbf{U}_{|\mathcal{J}_k|}^{-1}
                       \end{array}
 \right)  & =  \left[  \widetilde{\mathbf{F}}^{(k)} \right]^H_{\mathcal{J}_k}   = \left( \begin{array}{c}
            \overline{\mathbf{F}}_{k,1} \\
            \overline{\mathbf{F}}_{k,2}  
          \end{array} \right)
   \label{sysform2}
\end{align}
where $\overline{\mathbf{F}}_{k,1}$ is the sub-matrix of $\left[  \widetilde{\mathbf{F}}^{(k)} \right]^H_{\mathcal{J}_k}$ that consists of the first $|\mathcal{J}_k|$ rows of $ \left[  \widetilde{\mathbf{F}}^{(k)} \right]_{\mathcal{J}_k}^H $, and  $\overline{\mathbf{F}}_{k,2}$ is the sub-matrix that consists of the last $|\mathcal{I}_k|$ rows of $ \left[  \widetilde{\mathbf{F}}^{(k)} \right]_{\mathcal{J}_k}^H $. Thus, 
\begin{align}
\left\{ \begin{array}{c}
          \mathbf{U}_{|\mathcal{J}_k|}^{-1} = \overline{\mathbf{F}}_{k,1},  \\
          \mathbf{P}_{k,2} \mathbf{U}_{|\mathcal{J}_k|}^{-1}= \overline{\mathbf{F}}_{k,2}
        \end{array}
\right. \label{PK2}
\end{align} 
where $\mathbf{U}_{|\mathcal{J}_k|}$ should be invertible. According to \eqref{PK2}, $\mathbf{P}_{k,2}$ is derived as $\mathbf{P}_{k,2} = \overline{\mathbf{F}}_{k,2}\overline{\mathbf{F}}^{-1}_{k,1}$. Consequently, the systematic-form $\mathbf{P}_k$ is expressed as
\begin{align}
\mathbf{P}_k  = \left( \begin{array}{c}
                         \mathbf{I}_{|\mathcal{J}_k|} \\
                          \overline{\mathbf{F}}_{k,2}\overline{\mathbf{F}}^{-1}_{k,1}
                       \end{array}
 \right)  \label{sysform3}
\end{align} 
The precoder $\mathbf{P}_k $ in \eqref{sysform3} should then be normalized to meet the power constraint of \eqref{opt1c}.  This systematic precoder is the same as that in \cite{xia2024discrete}. Comparing with the precoder in \eqref{solution1}, the precoder in \eqref{sysform3} requires a matrix inverse that may change the signal peak-to-average power ratio (PAPR) for the transmission as studied in \cite{xia2024discrete}. In contrast, the precoder in \eqref{solution1} only does partial unitary matrix multiplications and does not require any matrix inverse, and thus may have a better PAPR property than that for the precoder in  \eqref{sysform3}.

\begin{figure*}[htp]
\begin{minipage}[!h]{.33\linewidth}
\centering
\includegraphics[width=5.5cm]{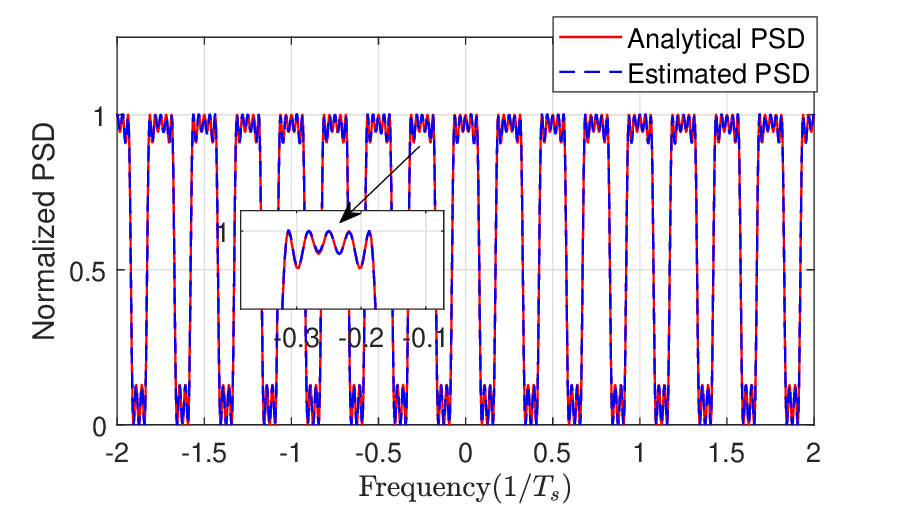}
\subcaption{ Dirac delta interpolation filter }
\label{xx1}
\end{minipage}
\begin{minipage}[!h]{.33\linewidth}
\centering
\includegraphics[width=5.5cm]{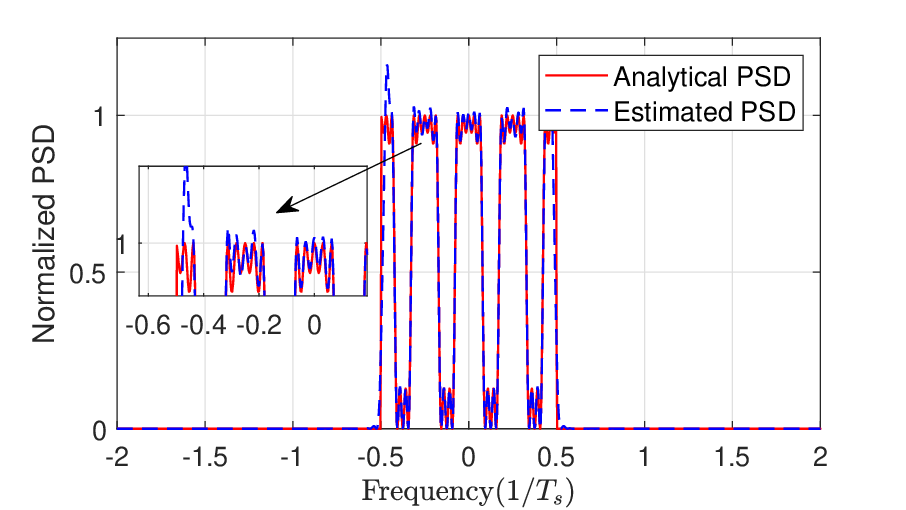}
\subcaption{Truncated sinc interpolation filter}
\label{xx2}
\end{minipage}
\begin{minipage}[!h]{.33\linewidth}
\centering
\includegraphics[width=5.5cm]{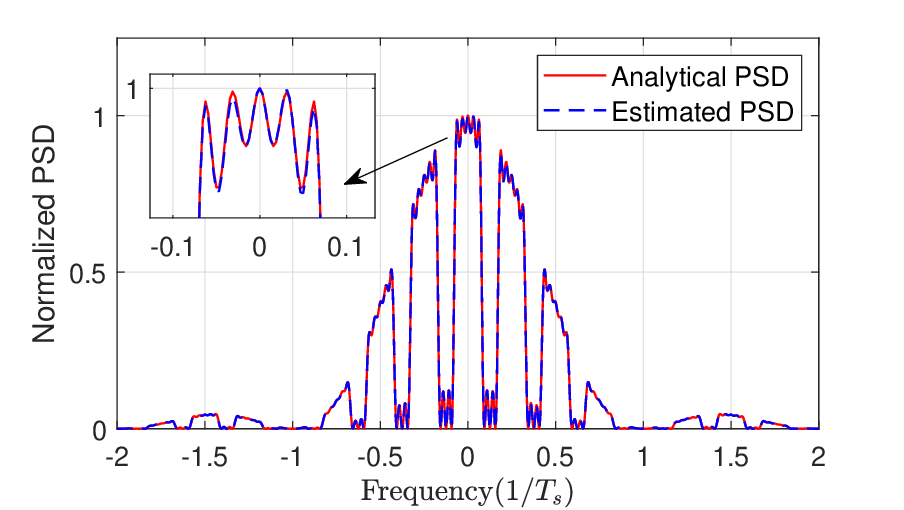}
\subcaption{Rectangular interpolation filter}
\label{xx3}
\end{minipage}
\caption{Comparisons of our derived analytical PSD with the estimated PSD of OTFS signals}
\label{PSDval}
\end{figure*}

\begin{remark} {\rm
In a more general scenario, where the dimension of the information symbol vector $\hat{\mathbf{x}}_i^{(k)}$, denoted as $L$,  is smaller than $|\mathcal{J}_k| $, i.e., $ L < |\mathcal{J}_k| $, the solution in \eqref{solution1} still holds by adjusting the size of $\mathbf{U}_{|\mathcal{J}_k|}$ from ${|\mathcal{J}_k| \times |\mathcal{J}_k|  }$  to  ${|\mathcal{J}_k| \times L }$. With a more general setting, more degree of freedoms may exist for the design, which may benefit for a particular application in practice.}
\end{remark}

\begin{remark} {\rm
The above precoding only depends on a given spectrum null requirement and a given information symbol vector  $\hat{\mathbf{x}}_k$ to send. However, when the channel state information is available, the precoder design can optimize power allocation across multiple spectral components, such as through water-filling power allocation, similar to the OFDM case, to enhance performance over a fading channel.}
\end{remark}

\begin{remark} {\rm
With the systematic-form linear precoder, the receiver can extract the information-bearing symbols $\mathbf{x}^{(k)}$ directly from the received codeword, especially when the channel is high-quality. This is critical for latency-sensitive applications (e.g., real-time video streaming) as it is computationally efficient.
}
\end{remark}

\section{Numerical Results}

In this section, we validate our derived PSD expression of OTFS signals and evaluate the capability of our proposed bandwidth allocation scheme. The amplitude of the estimated PSD is normalized in order to allow a qualitative comparison \cite{talbot2008spectral}.

\subsection{Validation of the Analytical Expressions for OTFS PSD}
In \figref{PSDval} and Table. \ref{T1}, we compare our derived analytical PSD expression with the estimated PSD. The analog OTFS signal is obtained with an oversampling rate of $100f_s$, effectively simulating $T_s \rightarrow 0$. In \figref{PSDval}, we plot the analytical PSD curve and the estimated PSD curve using the parameters in Example 2. We adopt the periodogram method for PSD estimation, which first divides the signal into non-overlapping segments, and then performs DFT for each segment, and finally averages them. The sampling frequency of the periodogram method is $100f_s$. Figs. \ref{PSDval}(a)-(c) correspond to the three interpolation filters described in Section II. C. Note that, as the ideal sinc function extends across the entire time domain, we use the truncated sinc function with the order of $50$ in our simulation. From \figref{PSDval}, we can see that the three PSD curves corresponding to the analytical expression in Prop. 2 closely follow the estimated PSD curve for the simulated signal. The difference of OTFS PSD with the sinc interpolation filter is more significant as the actual interpolation filter in the simulated signal is a truncated sinc function, instead of a full sinc function. Besides, we use normalized mean squared error (NMSE) and cosine similarity to quantitatively measure the differences between the analytical expressions and the estimated PSD curves in Table. \ref{T1}.

\begin{table}[tp] \centering
\noindent
\caption{Differences between the analytical PSD and the estimated PSD of OTFS signals with different interpolation filters} \label{T1}
 \begin{tabular} {|p{2.75cm}<{\centering}|p{1.3cm}<{\centering}|p{1.75cm}<{\centering}|p{1.25cm}<{\centering}|}
\hline\hline
\backslashbox{Metric}{Interpolation}    & Dirac delta   &  Truncated sinc   & Rectangular  \\ \hline
NMSE(dB)  &     -48.9872 &    -18.0664     &    -47.6115\\\hline
Cosine Similarity & 0.99999369    &  0.99221525    & 0.99999440  \\\hline
\end{tabular}
\end{table}

\subsection{Connection Between OTFS Signals and CEP-OFDM Signals}

In this subsection, we validate the relationship between the PSD of the OTFS signal and the PSD of the CEP-OFDM signals derived in \eqref{sumPSD}. We use the truncated sinc interpolation function to generate the analog OTFS signal. To be persuasive, we use the estimated PSD for comparison.

\begin{figure}[tp]{
\begin{minipage}[!h]{1\linewidth}
\centering
\includegraphics[width=6cm]{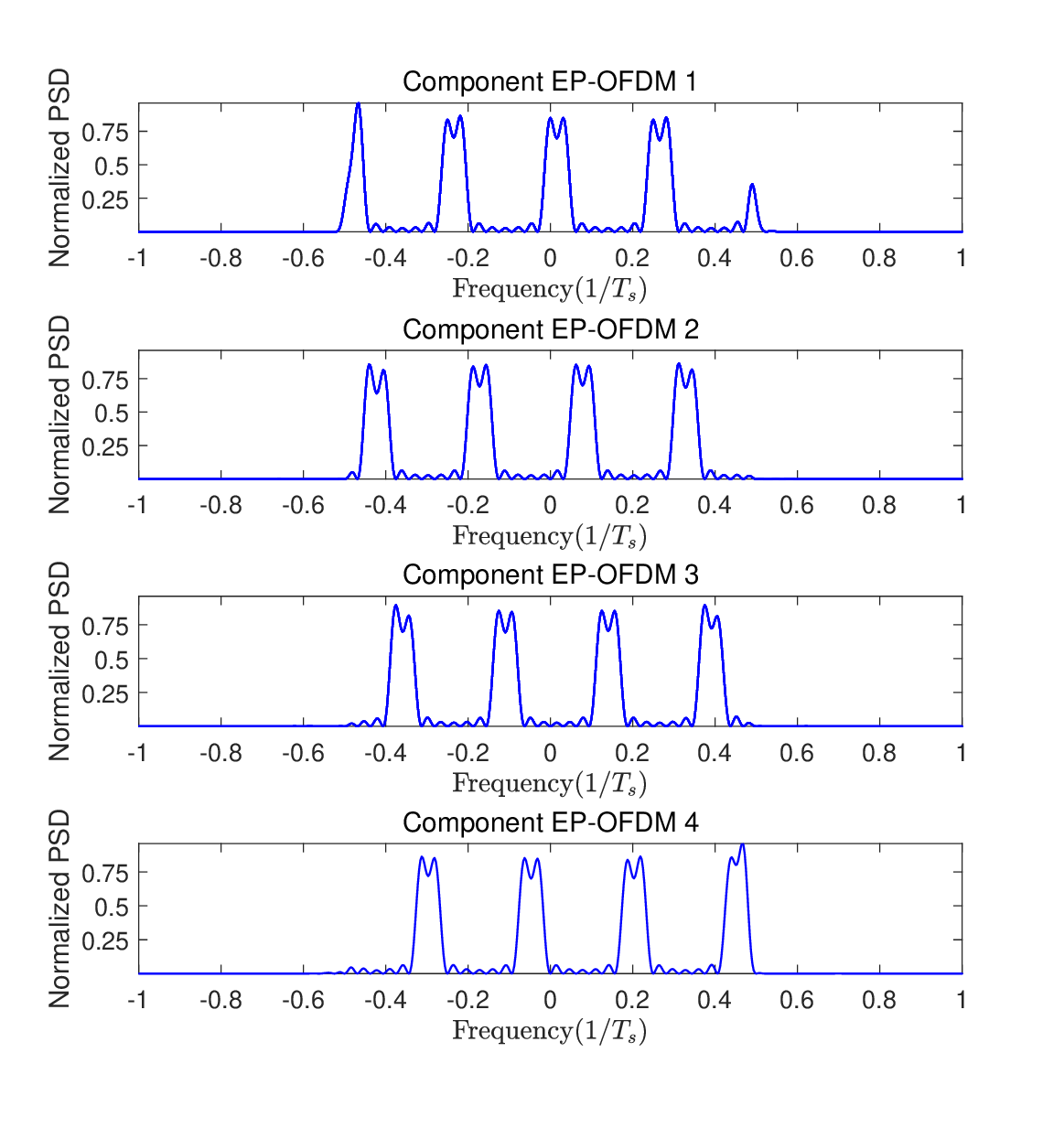}
\subcaption{The estimated PSD of the CEP-OFDM signals}
\label{ComponentZSOFDMv2}
\end{minipage}
\begin{minipage}[!h]{1\linewidth}
\centering
\includegraphics[width=6cm]{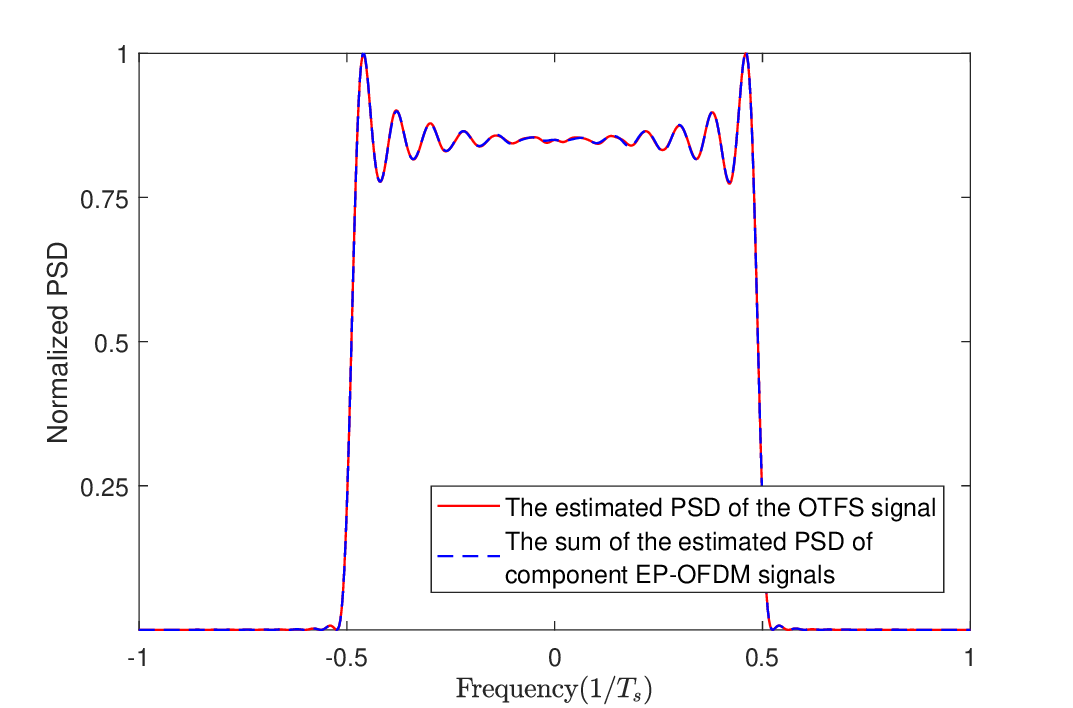}
\subcaption{The estimated PSD of the OTFS signal and the sum of the estimated PSD of the CEP-OFDM signals}
\label{ComponentZSOFDMv2}
\end{minipage}}
\caption{The estimated PSD curves of the OTFS signal and the CEP-OFDM signals when the information matrix is of the pattern $\mathbf{X}_1$ in \eqref{X2pattern}} \label{PSDotfsofdmv2}
\end{figure}

In \figref{PSDotfsofdmv2}, the parameters are  $N=8$ and $M=4$, and the information matrix is of the pattern
\begin{align}
&\qquad \mathbf{X}_1 = \notag \\
 &\left(
                 \begin{array}{cccccccc}
                   X_{0,0} & X_{0,1} & 0 & 0 & 0 & 0 & 0 & 0 \\
                   0 & 0 & X_{1,2} & X_{1,3} & 0 & 0 & 0  & 0  \\
                   0 & 0 & 0 & 0 & X_{2,4} & X_{2,5} & 0  & 0\\
                   0 & 0 & 0 & 0 & 0 &  0 &  X_{3,6}  & X_{3,7} \\
                 \end{array}
               \right) \label{X2pattern}
\end{align}
where $X_{m,n} $ is independently drawn from the QPSK constellation with unit power. Thus, $\sigma_k^2 = \frac{1}{4}$ for  $k \in \{1, 2, 3, 4, 5, 6, 7, 8\}$. The OTFS signal used for PSD estimation is composed of $10^6$ consecutive OTFS symbols. From \figref{PSDotfsofdmv2}(a), we can see that the PSD curves of the $M=4$ CEP-OFDM signals are non-overlapping, which is a result of the block diagonal structure of $\mathbf{X}_1$. From \figref{PSDotfsofdmv2}(b), we can also find that the PSD of the OTFS signal is almost identical to the sum of the estimated PSD of the CEP-OFDM signals, which verifies the conclusion in \eqref{sumPSD}. It is noteworthy that the PSD of the OTFS signal corresponding to $\mathbf{X}_1$ is exactly the squared Fourier transform of the truncated sinc interpolation filter. This is because $\sigma_k^2$ remains constant for any $k$.

Note that the equivalence between the sum PSD of the CEP-OFDM signals and the PSD of the OTFS signal relies on the cross correlation of the information sequences corresponding to different CEP-OFDM signals. Since the estimation of PSD is based on the input time-domain signal, the signal's length is crucial for determining  the cross correlation. Thus, in \figref{NMSEcos}, we investigate the discrepancy between the PSD of the OTFS signal and  the sum of the estimated PSD of the CEP-OFDM signals. It can be seen that the NMSE decreases as the signal length increases, while the cosine similarity increases as the signal length increases. This implies that when the signal length is infinite, the equivalence holds almost surely.
\begin{figure}[htp]{
\begin{center}{\includegraphics[width=6cm ]{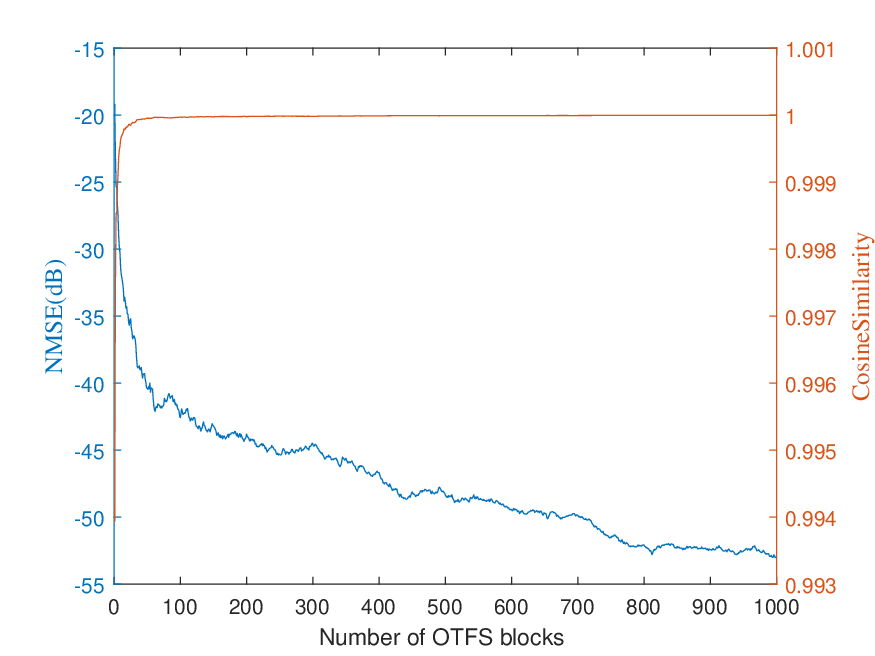}}
\caption{NMSE/Cosine similarity versus OTFS block number (each OTFS block consists of $100$ OTFS symbols)}\label{NMSEcos}
\end{center}}
\end{figure}

\subsection{Bandwidth Allocation of OTFS Signals}

One of the most frequently cited advantages of OTFS is its architectural compatibility with various multicarrier modulation schemes, including traditional OFDM\cite{hadani2018otfs, OTFS}. However, as analyzed in this paper, merely substituting the baseband sequence of the OFDM signal with that of the OTFS signal does not adequately meet the spectral requirements of OFDM-related wireless standards. To numerically validate the effectiveness of our proposed bandwidth allocation scheme for OTFS and its compatibility with the specifications of OFDM-based systems,  we adopt the OFDM parameters from Sec. IV(A), i.e.,  $f_s = \frac{1}{T_s} = 30.72$MHz, the required bandwidth of $18$MHz, and  the subcarrier spacing of $15$kHz, to illustrate the spectral properties of OTFS signal with different zero-setting patterns of the information matrix $\mathbf{X} \in \mathbb{C}^{M\times N}$, where $M=16, N=128$. To facilitate comparison, we set the number of non-zero entries in $\mathbf{X}$ to $1201$, the same as the OFDM case discussed in Section IV(A).
\begin{figure}[tp]{
\begin{minipage}[!h]{1\linewidth}
\centering
\includegraphics[width=6cm]{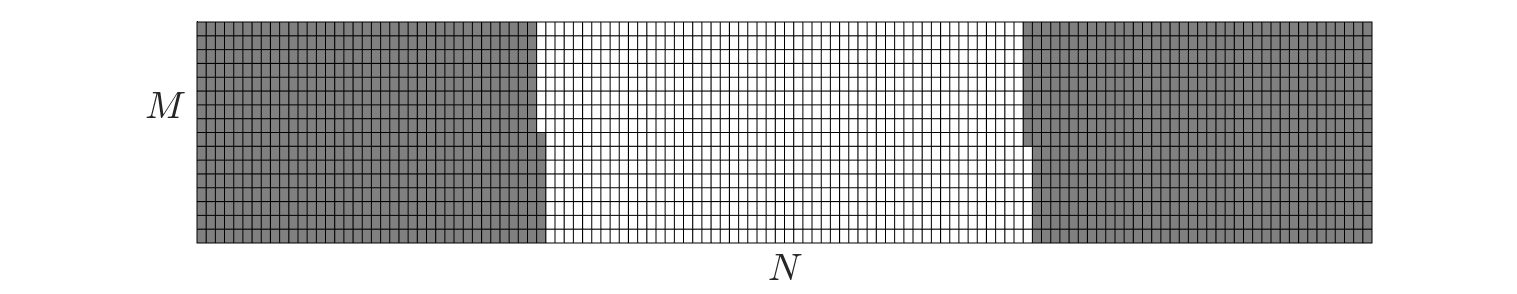}
\subcaption{Zero-setting pattern $1$ of the information matrix}
\label{QAMpatternV1a}
\end{minipage}
\begin{minipage}[!h]{1\linewidth}
\centering
\includegraphics[width=6cm]{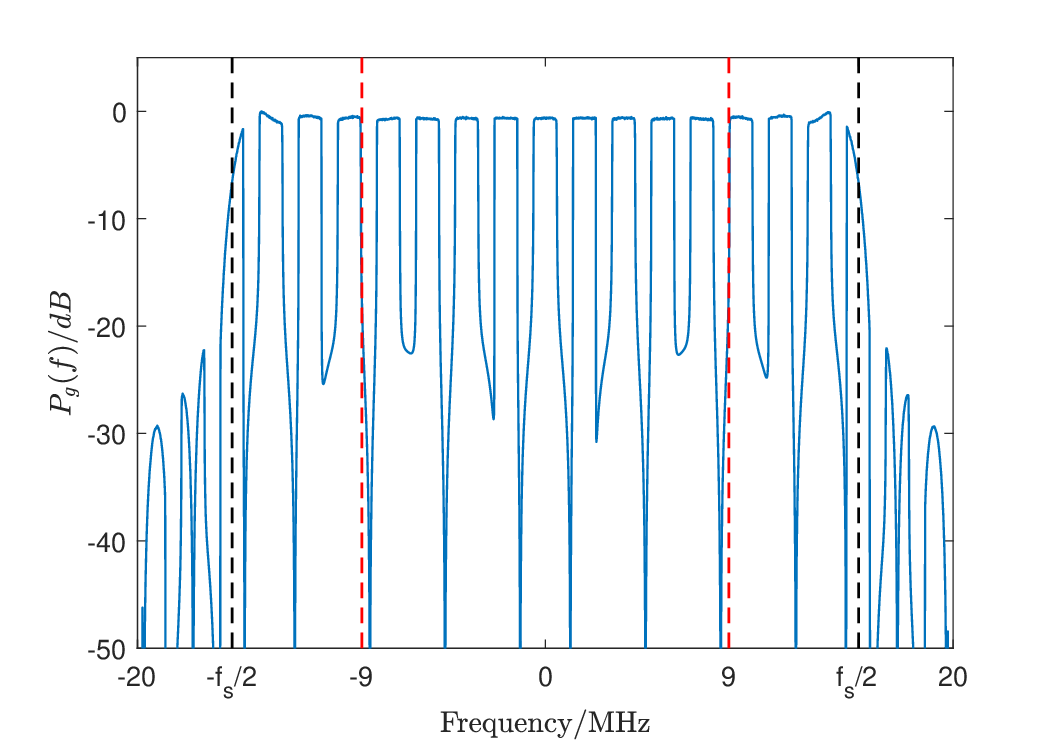}
\subcaption{The corresponding PSD}
\label{QAMpatternV1b}
\end{minipage}}
\caption{Zero-setting pattern $1$ of the information matrix and the corresponding PSD (The blue dotted line represents the spectrum range of the DAC with an ideal sinc interpolation filter, while the red dotted line represents the desired spectrum range.)} \label{QAMpatternV1}
\end{figure}

In \figref{QAMpatternV1},  $1201$ entries located in the head and tail columns of $\mathbf{X}$ are used to carry the information symbols, and the rest $847$ entries located in the middle columns are mapped to zero, which is shown in \figref{QAMpatternV1}(a). Recall that the PSD of the OTFS signal is the sum PSD of the CEP-OFDM signals. Since the symbol-to-subcarrier zero-setting of $M$ CEP-OFDM signals, i.e., columns in \figref{QAMpatternV1}(a), are nearly identical in  \figref{QAMpatternV1}(a), the PSD of the OTFS signal retains a similar shape to that of a CEP-OFDM signal. Because the middle subcarriers of the CEP-OFDM signal are set to zero, the PSD of the corresponding OTFS signal takes on the comb shape depicted in \figref{QAMpatternV1}(b). Another noteworthy point is that the spectrum of the resulting OTFS signal occupies the entire spectral range, i.e., $[-f_s/2, f_s/2]$.

In \figref{QAMpatternV2}, $1201$ entries located in the head and tail rows of $\mathbf{X}$  are used to carry the  information symbols, and the rest $847$ entries located in the middle rows are mapped to zero, which is shown in \figref{QAMpatternV2}(a). Unlike Pattern 1 in \figref{QAMpatternV1}(a), the CEP-OFDM symbols in Pattern 2 in \figref{QAMpatternV2}(a) are categorized into three types: non-zero, partial-zero, and full-zero cases. Since the PSD of the OTFS signal is the sum PSD of the CEP-OFDM signals, the resulting PSD is nearly flat across the entire spectral range of $[-f_s/2, f_s/2]$, as shown in \figref{QAMpatternV2}(b).

\begin{figure}[tp]{
\begin{minipage}[!h]{1\linewidth}
\centering
\includegraphics[width=6cm]{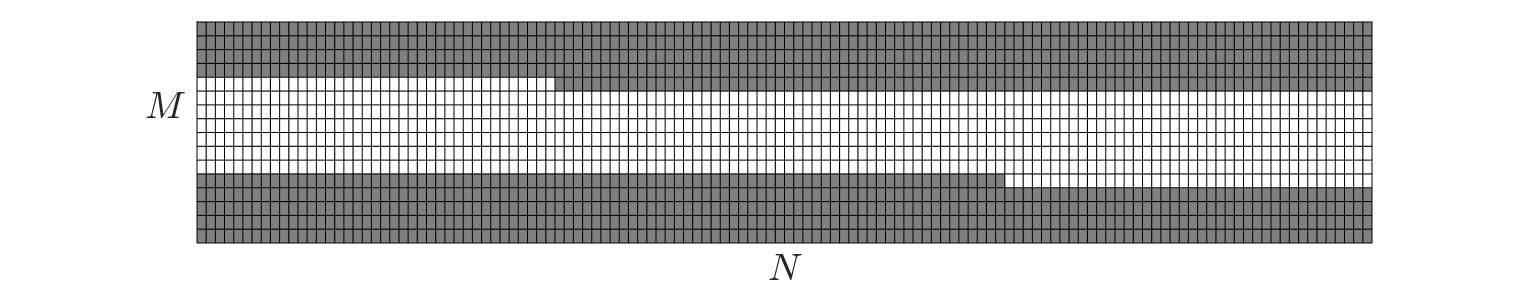}
\subcaption{Zero-setting pattern $2$ of the information matrix}
\label{QAMpatternV2a}
\end{minipage}
\begin{minipage}[!h]{1\linewidth}
\centering
\includegraphics[width=6cm]{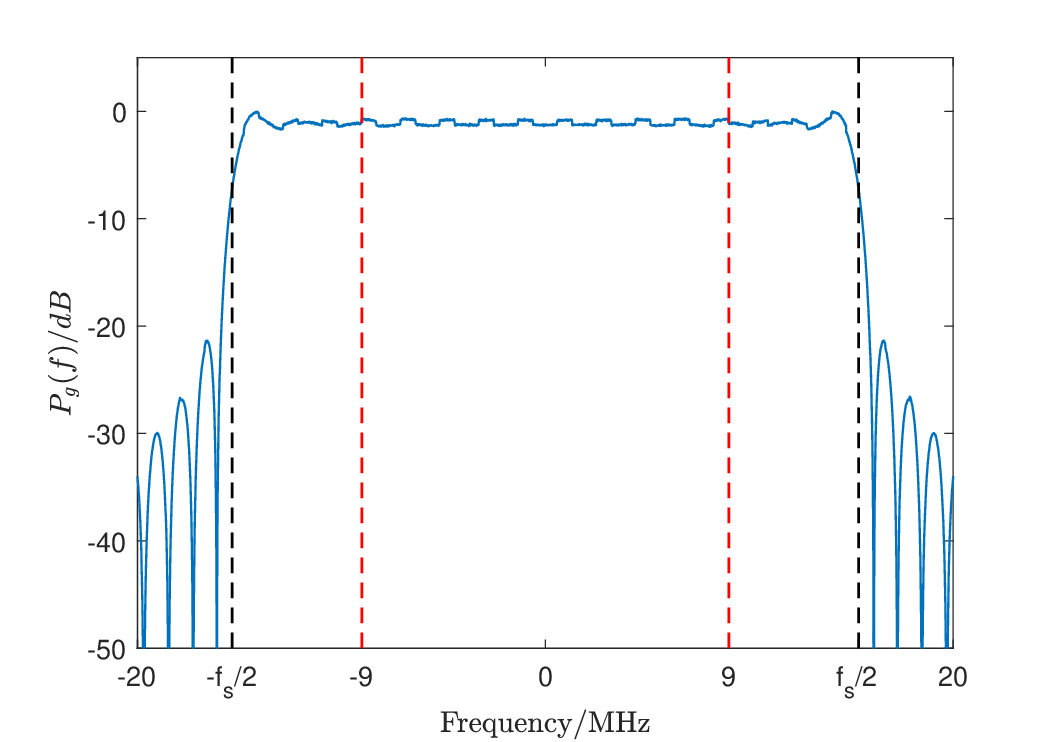}
\subcaption{The corresponding PSD}
\label{QAMpatternV2b}
\end{minipage}}
\caption{Zero-setting pattern $2$ of the information matrix and the corresponding PSD  (The blue dotted line represents the spectrum range of the DAC with an ideal sinc interpolation filter, while the red dotted line represents the desired spectrum range.)} \label{QAMpatternV2}
\end{figure}

The results in \figref{QAMpatternV1} and \figref{QAMpatternV2} indicate that simply setting zeros to the entries of the information matrix $\mathbf{X}$, as in OFDM bandwidth allocation, does not effectively customize the spectrum, leading to a waveform that occupies the entire spectral range. Therefore, it is essential to implement a flexible bandwidth allocation scheme for OTFS signals to ensure compatibility with current wireless standards, which are mostly based on multi-carrier modulation. In \figref{NSLP},  we validate the effectiveness of our proposed NSLP based  bandwidth allocation scheme for OTFS signals.  In particular, we set the $\mathbf{U}_{|\mathcal{J}_k|}$ in the precoder of \eqref{solution1} as an identity matrix of appropriate size. In order to achieve the desired bandwidth, an appropriate zero-setting, i.e., the proper selection of $\mathcal{I}_k, k=0,\cdots, N-1$, is also essential. By first determining the indices of the occupied discrete spectral components of $\mathbf{y}$ according to \eqref{ZeroSpectrum} and then transforming  the subscript $(m, k)$ of $\mathbf{y}$ to the subscript $(l,k)$ of the information-bearing matrix $\mathbf{X}$  according to \eqref{DiscreteRelationship}, the appropriate zero-setting that corresponds to the desired spectrum range of $[-9\text{MHz}, 9\text{MHz})$ is found to be Pattern $2$ of $\mathbf{X}$ in \figref{QAMpatternV2}(a). Then, based on Pattern $2$, we perform NSLP design. Taking the $1$-st subcarrier in \figref{QAMpatternV2}(a) as an example, the entries with indices in $\mathcal{J}_0 = \{1,2,3,4, 12,13,14,15,16\}$ are non-zero, with each entry corresponding to the $M$ spikes at the same positions in the frequency domain. Through precoding, the NSLP method effectively suppresses the $|\mathcal{I}_0|=7$ spectral components in the unwanted region, i.e., $[-f_s/2, -9\text{MHz}) \cup [9\text{MHz}, f_s/2)$, while maintaining the $|\mathcal{J}_0|=9$ spectral components in the desired region, i.e., $[-9\text{MHz}, 9\text{MHz})$. Then, the precoding procedure is applied across the rest $N-1$ subcarriers of the CEP-OFDM signals, resulting in a continuous and flat spectrum within the range of $[-9\text{MHz}, 9\text{MHz})$.

\begin{figure}[tp]{
\begin{center}{\includegraphics[width=6cm ]{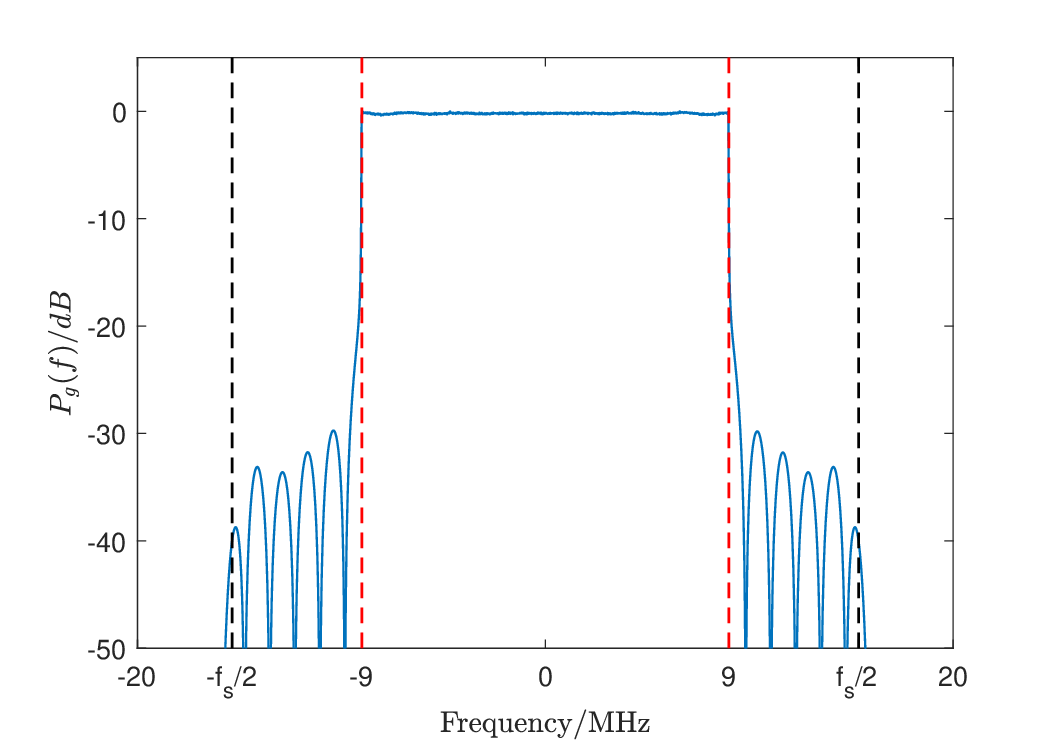}}
\caption{The PSD of the OTFS signal  tailored by our proposed NSLP method  (The blue dotted line represents the spectrum range of the DAC with an ideal sinc interpolation filter, while the red dotted line represents the desired spectrum range.)}\label{NSLP}
\end{center}}
\end{figure}

\section{Conclusions}
In this paper, we carry out PSD analysis of OTFS signals. We demonstrate that the PSD of discrete OTFS signals is periodic with a period of $\frac{1}{MT_s}$, resulting in $M$ identical spectral components within the range $[-\frac{1}{2T_s}, \frac{1}{2T_s})$. We further express that, when the information symbols are independent, the PSD of OTFS signals is equal to the sum of the PSDs of the CEP-OFDM signals. In addition,  we propose a null-space-based linear precoding method for OTFS signals to enable flexible bandwidth allocation. Numerical results validate our analytical results regarding the PSD of OTFS signals and show the effectiveness of our proposed NSLP method in tailoring the spectrum of OTFS signals.

\begin{appendices}

\section{Proof of Proposition 1}
The autocorrelation function of $s_{\eta}$ is given by
\begin{align}
& R_{ss}(\eta, \hat{\eta}) = \mathbb{E}( {s}^*_{\eta}{s}_{\hat{\eta}}  )  \notag \\
 = &  \sum_{i = -\infty}^{\infty} \sum_{\hat{i} = -\infty}^{\infty}  \sum_{n = 0}^{N-1} \sum_{\hat{n} = 0}^{N-1} \sum_{l=0}^{M-1} \sum_{\hat{l}=0}^{M-1}   \sum_{k=0}^{N-1}  \sum_{\hat{k}=0}^{N-1}  \frac{ \mathbb{E} ( {x}^*_{i, l,k}{x}_{\hat{i}, \hat{l}, \hat{k}}  )  }{N}     \notag \\
   &    e^{\jmath 2\pi \frac{\hat{k} \hat{n}}{N} }  e^{-\jmath 2\pi \frac{k n}{N} }    \delta [\eta - iMN - nM - l]    \notag \\
   & \delta [\hat{\eta} - \hat{i}MN - \hat{n}M - \hat{l}]  \notag \\
 = &  \sum_{i = -\infty}^{\infty}  \sum_{n = 0}^{N-1} \sum_{\hat{n} = 0}^{N-1} \sum_{l=0}^{M-1}  \sum_{k=0}^{N-1}  \frac{ \sigma^2_{l,k} }{N}   e^{-\jmath 2\pi \frac{k (n-\hat{n})}{N} }    \notag \\
  &      \delta [\eta - iMN - nM - l]    \delta [\hat{\eta} -  {i}MN - \hat{n}M -  {l}]
\end{align}
In addition, the mean of  $s_{\eta}$ is given by  $\mathbb{E}(s_{\eta}) \equiv 0$.

Since $R_{ss}(\eta,  \hat{\eta})  = R_{ss}(\eta + MN,  \hat{\eta}  + MN) $ and $\mathbb{E}(s_{\eta})  = \mathbb{E}(s_{\eta+ \Delta \eta})   $, $\{s_{\eta}\}$ is \emph{cyclostationary} with a period of $MN$.

\section{Proof of Proposition 2}

The autocorrelation function of $ {s}(t)$ is
\begin{align}
&\quad \phi(t, \tau) =  \mathbb{E}( {s}^*(t)  {s}(t+\tau) ) \notag \\
& =   \sum_{i = -\infty}^{\infty} \sum_{\hat{i} = -\infty}^{\infty}  \sum_{n = 0}^{N-1} \sum_{\hat{n} = 0}^{N-1} \sum_{l=0}^{M-1} \sum_{\hat{l}=0}^{M-1}   \sum_{k=0}^{N-1}  \sum_{\hat{k}=0}^{N-1}  \frac{ \mathbb{E} ( {x}^*_{i, l,k}{x}_{\hat{i}, \hat{l}, \hat{k}}  )  }{N}     \notag \\
   & \quad   e^{\jmath 2\pi \frac{\hat{k} \hat{n}}{N} }  e^{-\jmath 2\pi \frac{k n}{N} }    g_{I}( t - (i NM + nM + l)  T_s)    \notag \\
   &\quad  g_{I}( t - (i NM + \hat{n}M + l)T_s  + \tau)   \notag \\
& =   \sum_{i= -\infty}^{\infty}   \sum_{n= 0}^{N-1}  \sum_{\hat{n}= 0}^{N-1}    \sum_{l=0}^{M-1}   \sum_{k=0}^{N-1}   \frac{ \sigma_{l,k}^2}{N}  e^{-\jmath 2\pi \frac{k (n - \hat{n})}{N} }  \notag \\
&     g_{I}( t - (i NM + nM + l)  T_s) g_{I}( t - (i NM + \hat{n}M + l)T_s  + \tau)
\label{OTFSbasebandxx}
\end{align}
where \eqref{OTFSbasebandxx} is obtained as
\begin{align}
\mathbb{E}({x}^*_{i,l,k}{x}_{\hat{i},\hat{l},\hat{k}})  = \left\{   \begin{array}{cc}
                                                                                        \sigma_{l,k}^2, & l = \hat{l}\; {\rm and} \; k = \hat{k} \\
                                                                                        0, & {\rm otherwise }
                                                                                      \end{array} \notag
\right.
\end{align}
In addition, the mean of $s(t)$ is $\mathbb{E}\left( {s}(t) \right)    \equiv 0$.

Since  $ \phi(t+ NMT_s, \tau) = \phi(t, \tau) $ and  $ \mathbb{E}\left( {s}(t + NMT_s) \right) = \mathbb{E}\left( {s}(t) \right) $, $s(t)$ is \emph{cyclostationary} with a period $NMT_s$.

In order to compute the PSD of a cyclostationary process, the time-averaged autocorrelation function is obtained by averaging $\phi(t, \tau)$ over a single period $NMT_s$, i.e.,
\begin{align}
& \overline{\phi}(\tau)  = \frac{1}{NMT_s} \int_{-\frac{NMT_s}{2}}^{\frac{NMT_s}{2}} \phi(t, \tau) dt \notag \\
= &      \sum_{l=0}^{M-1}   \sum_{k=0}^{N-1}  \sum_{n= 0}^{N-1}  \sum_{\hat{n}= 0}^{N-1}  \frac{ \sigma_{l,k}^2 }{N}      e^{-\jmath 2\pi \frac{k (n - \hat{n})}{N} }    \phi_{gg}(\tau - (n-\hat{n})MT_s)  \notag
\end{align}
where
\begin{align}
&\phi_{gg}(\tau - (n-\hat{n})MT_s)  \notag \\
= &\frac{1}{NMT_s}   \int_{-\infty}^{\infty}  g_{I}( t )  g_{I}( t +\tau + (n-  \hat{n})MT_s)   dt
\end{align}

The Fourier transform of $\overline{\phi}(\tau)$ yields the PSD of $s(t)$ as follows
\begin{align}
& P_g(f) \notag \\
= & \sum_{l=0}^{M-1}   \sum_{k=0}^{N-1}   \frac{ \sigma_{l,k}^2 }{MN^2T_s}    \sum_{n= 0}^{N-1}  \sum_{\hat{n}= 0}^{N-1}  e^{- \jmath 2\pi \frac{k n - k \hat{n}}{N} }   \notag \\
&\mathcal{F}(\phi_{gg}(\tau - (n-\hat{n})MT_s))  \notag \\
= & \sum_{l=0}^{M-1}   \sum_{k=0}^{N-1}   \frac{ \sigma_{l,k}^2 }{MN^2T_s}  |G_I(f)|^2  \sum_{n= 0}^{N-1}  e^{-\jmath 2\pi (\frac{k  }{N} -   MT_s f)n } \cdot  \notag \\
&\sum_{\hat{n}= 0}^{N-1}  e^{\jmath 2\pi (\frac{  k }{N} - MT_s f)\hat{n}}    \notag \\
= &   \underbrace{\sum_{k=0}^{N-1}  \frac{\sigma_{k}^2}{T_s}    \left( \frac{ \sin(\pi (k -   MNT_s f) )   }{ N \sin(\frac{\pi}{N} (k -   MNT_s f) )  }   \right)^2}_{P_d(f)}  |G_I(f)|^2
\end{align}
where $ \sigma_{k}^2 = \frac{ \sum_{l=0}^{M-1} \sigma_{l,k}^2 }{M}$, $ G_I(f) = \mathcal{F}(g_I(t)) $ is the Fourier transform of the interpolation filter $g_I(t)$.
Note that $P_d(f)$ is the PSD of the discrete-time OTFS signal with the sampling rate $\frac{1}{T_s}$, and the derivation can be performed by simply replacing  $g_I( t)$ in \eqref{OTFSbaseband} by $\delta(t)$.

\end{appendices}

\bibliographystyle{IEEEtran}%
\bibliography{bibfile}

\end{document}